\DeclareMathAlphabet\mathbfscr{U}{eus}{b}{n}
\newcommand{\greybox}[1]{\colorbox{black!20}{\ensuremath{#1}}}
\newcommand{\cmark}{\ding{51}}%
\newcommand{\xmark}{\ding{55}}%
\newcommand{\sep}{\quad | \quad}
\newcommand{\OK}{\textsc{Ok}}
\newcommand{\CU}{\textbf{CU}}
\newcommand{\DualCalculus}{\textbf{CD}}
\newcommand{\lambdamutildemu}{$\bar{\lambda} \mu \tilde\mu$}
\newcommand*{\derefunc}{de/re\-func\-tion\-ali\-za\-tion}
\newcommand*{\Derefunc}{De/Re\-func\-tion\-ali\-za\-tion}
\newcommand*{\cbxtocby}{evaluation order switch}
\newcommand{\etaeq}{\ensuremath{=_{\eta}}}
\newcommand{\idmorphism}[1]{\ensuremath{\text{id}_{#1}}}
\newcommand{\data}{\ensuremath{\mathbf{data}}}
\newcommand{\codata}{\ensuremath{\mathbf{codata}}}
\newcommand{\match}[1]{\mathbf{match}_{#1}}
\newcommand{\case}[3]{\texttt{#1}(#2) \Rightarrow #3}
\newcommand{\cbv}{\ensuremath{\mathbf{cbv}}}
\newcommand{\cbn}{\ensuremath{\mathbf{cbn}}}
\newcommand*{\flip}[1]{\hat{#1}}
\newcommand{\mkCmd}{\mathrel{\gg}}
\newcommand{\Done}{\ensuremath{\mathbf{Done}}}
\newcommand{\muprdabs}[3]{\ensuremath{\mu(#1 \con #2).#3}}
\newcommand{\muconabs}[3]{\ensuremath{\mu(#1 \prd #2).#3}}
\newcommand{\muqabs}[3]{\ensuremath{\mu(#1 \overset{o}{:} #2).#3}}
\newcommand{\valfun}[1]{\ensuremath{\mathbf{val}(#1)}}
\newcommand{\cntfun}[1]{\ensuremath{\mathbf{cnt}(#1)}}
\newcommand{\wrap}[2][]{\,\overset{\textit{\tiny#2}}{\colon}#1\,}
\newcommand{\prd}{\wrap{\raisebox{0pt}{\ensuremath{\rule{0pt}{3pt}\smash{\mathbf{prd}}}}}}
\newcommand{\prdcon}{\wrap{\raisebox{0pt}{\ensuremath{\rule{0pt}{3pt}\smash{o}}}}}
\newcommand{\con}{\wrap{\rule{0pt}{1pt}\ensuremath{\mathbf{con}}}}
\newcommand{\cmd}{\wrap{\ensuremath{\mathbf{cmd}}}}
\newcommand{\substitutable}{\textsc{Subst}}
\newcommand{\reducesto}{\mathrel{\triangleright}}
\newcommand{\reducestoinv}{\mathrel{\triangleleft}}
\newcommand*{\xfunLetter}{\mathcal{C}}
\newcommand{\xfunctionalize}{\ensuremath{\xfunLetter_p}}
\newcommand{\xfunctionalizeinv}{\ensuremath{\xfunLetter_{\flip{p}}}}
\newcommand{\refunctionalize}{\ensuremath{\xfunLetter_{\codata}}}
\newcommand{\defunctionalize}{\ensuremath{\xfunLetter_{\data}}}
\newcommand*{\fullXfunLetter}{\mathcal{F}}
\newcommand{\fxfunctionalize}{\ensuremath{\fullXfunLetter_p}}
\newcommand{\frefunctionalize}{\ensuremath{\fullXfunLetter_{\codata}}}
\newcommand{\fdefunctionalize}{\ensuremath{\fullXfunLetter_{\data}}}
\newcommand{\cbxshiftCommon}[2]{\ensuremath{\uparrow^{#1}\!#2}}
\newcommand{\cbvshift}[1]{\cbxshiftCommon{\cbv}{#1}}
\newcommand{\cbnshift}[1]{\cbxshiftCommon{\cbn}{#1}}
\newcommand{\cbxshift}[1]{\cbxshiftCommon{s}{#1}}
\newcommand{\cbxshiftinv}[1]{\cbxshiftCommon{\flip{s}}{#1}}
\newcommand{\cbvterm}[1]{\ensuremath{\texttt{CBV}(#1)}}
\newcommand{\cbnterm}[1]{\ensuremath{\texttt{CBN}(#1)}}
\newcommand{\evaltransnametemplate}[2]{\mathcal{T}^{#1}_{#2}}
\newcommand{\evaltranstemplate}[3]{\ensuremath{\evaltransnametemplate{#1}{#2}\!(#3)}}
\newcommand{\evaltransinv}[1]{\evaltranstemplate{T}{\flip{s}}{#1}}
\newcommand{\evaltrans}[1]{\evaltranstemplate{T}{s}{#1}}
\newcommand{\evaltransname}{\ensuremath{\evaltransnametemplate{T}{s}}}
\renewcommand{\S}{\ensuremath{\mathcal{S}}}
\newcommand{\doubletrans}[1]{\evaltransinv{\evaltrans{#1}}}
\newcommand*{\cbxLetter}{\mathcal{T}}
\newcommand{\cbvtocbn}{\ensuremath{\cbxLetter_{\cbn}}}
\newcommand{\cbntocbv}{\ensuremath{\cbxLetter_{\cbv}}}
\newcommand{\X}{\ensuremath{\mathcal{X}}}
\newcommand{\Y}{\ensuremath{\mathcal{Y}}}
\newcommand{\N}{\ensuremath{\mathbb{N}}}
\begin{document}

\title{Data-Codata Symmetry and its Interaction with Evaluation Order}
\subtitle{}

\author{David Binder}
\orcid{nnnn-nnnn-nnnn-nnnn}             
\affiliation{
  \department{Department of Computer Science}              
  \institution{University of Tübingen}            
  \streetaddress{Sand 14}
  \city{Tübingen}
  \postcode{72076}
  \country{Germany}
}
\email{david.binder@uni-tuebingen.de}          

\author{Julian Jabs}
\orcid{nnnn-nnnn-nnnn-nnnn}             
\affiliation{
  \department{Department of Computer Science}              
  \institution{University of Tübingen}            
  \streetaddress{Sand 14}
  \city{Tübingen}
  \postcode{72076}
  \country{Germany}
}
\email{julian.jabs@uni-tuebingen.de}          

\author{Ingo Skupin}
\orcid{nnnn-nnnn-nnnn-nnnn}             
\affiliation{
  \department{Department of Computer Science}              
  \institution{University of Tübingen}            
  \streetaddress{Sand 14}
  \city{Tübingen}
  \postcode{72076}
  \country{Germany}
}
\email{skupin@informatik.uni-tuebingen.de}          

\author{Klaus Ostermann}
\orcid{nnnn-nnnn-nnnn-nnnn}             
\affiliation{
  \department{Department of Computer Science}              
  \institution{University of Tübingen}            
  \streetaddress{Sand 14}
  \city{Tübingen}
  \postcode{72076}
  \country{Germany}
}
\email{klaus.ostermann@uni-tuebingen.de}          

\begin{CCSXML}
<ccs2012>
  <concept>
    <concept_id>10003752.10010124.10010131.10010134</concept_id>
    <concept_desc>Theory of computation~Operational semantics</concept_desc>
    <concept_significance>500</concept_significance>
  </concept>
  <concept>
    <concept_id>10003752.10003753.10010622</concept_id>
    <concept_desc>Theory of computation~Abstract machines</concept_desc>
    <concept_significance>500</concept_significance>
  </concept>
  <concept>
    <concept_id>10011007.10011006.10011008.10011009.10011021</concept_id>
    <concept_desc>Software and its engineering~Multiparadigm languages</concept_desc>
    <concept_significance>500</concept_significance>
  </concept>
  <concept>
    <concept_id>10011007.10011006.10011008.10011024.10003202</concept_id>
    <concept_desc>Software and its engineering~Abstract data types</concept_desc>
    <concept_significance>500</concept_significance>
  </concept>
</ccs2012>
\end{CCSXML}

\ccsdesc[500]{Theory of computation~Operational semantics}
\ccsdesc[500]{Theory of computation~Abstract machines}
\ccsdesc[500]{Software and its engineering~Multiparadigm languages}
\ccsdesc[500]{Software and its engineering~Abstract data types}

\keywords{Data and codata, Operational semantics, Defunctionalization, Refunctionalization, Duality}

\begin{abstract}
  Data types and codata types are, as the names suggest, often seen as duals of each other. However, most
  programming languages do not support both of them in their full generality, or if they do, they are still
  seen as distinct constructs with separately defined type-checking, compilation, etc.

  \citeauthor{uroboro2015} were the first to propose variants of two standard program transformations, 
  defunctionalization and refunctionalization, as a test to gauge and improve the symmetry between data and codata types.
  However, in previous works, codata and data were still seen as separately defined language constructs, with
  defunctionalization and refunctionalization being defined as similar but separate algorithms. These works also 
  glossed over  interactions between the aforementioned transformations and evaluation order, which leads 
  to a loss of desirable $\eta$ expansion equalities.

  We argue that the failure of complete symmetry is due to the inherent asymmetry of natural deduction as the logical
  foundation of the language design. Natural deduction is asymmetric in that its focus is on \emph{producers} (proofs)
  of types, whereas consumers (contexts, continuations, refutations) have a second-class status. Inspired by existing
  sequent-calculus-based language designs, we present the first language design that is fully symmetric in that the issues of polarity (data type vs codata types) and
  evaluation order (call-by-value vs call-by-name) are untangled and become independent attributes of a single form
  of type declaration. Both attributes, polarity and evaluation order, can be changed independently by \emph{one} algorithm each.
  In particular, defunctionalization and refunctionalization are now \emph{one} algorithm.
  Evaluation order can be defined and changed individually for each type, independently from polarity. By allowing only certain combinations of
  evaluation order and polarity, the aforementioned $\eta$ laws can be restored.
\end{abstract}

\maketitle

\section{Introduction}
\label{sec:intro}

Data types, codata types and their associated cousins of induction and coinduction have a
rich history in the programming language literature. Data types, in the form of algebraic
data types and pattern matching, are a staple of functional programming. Codata types, on the other
hand, are traditionally often only supported in the degenerate form of $\lambda$-abstractions (the special
case of a codata type with a single \emph{apply} destructor)%
\footnote{Objects as in object-oriented languages can be seen as a more general form of codata \cite{cook09understanding}:
an object is an instance of some class which implements an interface which specifies destructors (methods).
Other features of OO languages, such as inheritance, do not readily fit with this connection though.}.

More recently, the support for codata types in programming languages has grown, especially after
the discovery of copattern matching \cite{abel13copatterns} as an efficient notation; for instance,
in OCaml \cite{10.1145/3131851.3131869} and Agda \cite{cockxabel2020}. However, even in those
languages that do support codata types, the apparent symmetry between data types and codata
types is not used: Type checking, implementation etc.~is separate; transformations between
data types and codata types are manual.

The aim of this work is to improve the symmetry between data and codata.
There are three frameworks in which they can be compared. First, from the
perspective of universal algebra and category theory, the two can be related via their respective
semantics as initial algebras and final coalgebras. Second, from the point of the Curry-Howard isomorphism,
the two can be related via the usual dualities from logic: The sum data type is dual to product codata type
in the same way that disjunction is dual to conjunction, and so forth.

In this work, we are using a third method that was first proposed by \citet{uroboro2015} and
later elaborated by \citet{ostermann2018dualizing} and \citet{Uroboro2019}, namely
to relate data and codata types by generalizations of two well-known \emph{global} program transformations,
defunctionalization \cite{reynolds72definitional,danvy01defunctionalization} and refunctionalization \cite{danvy09refunctionalization}.
Refunctionalization turns a data type into a codata type by turning every constructor of the data type
into a function containing a copattern match on the codata type, and turning every pattern match
on the data type into a destructor invocation of the codata type. For instance, a data type for natural numbers
with the standard {\tt Suc} and {\tt Zero} constructors and a program that contains a single pattern match
(let's call it {\tt isZero}) that returns {\tt True} iff the number is {\tt Zero} is turned into a codata
type with a single destructor {\tt isZero} and two functions {\tt Suc} and {\tt Zero} that
copattern-match on the codata type to produce {\tt False} and {\tt True}, respectively, when the {\tt isZero}
destructor is invoked. It is essential that this transformation is global; the destructors are
determined by the set of pattern matches that can be found in the program. Defunctionalization is defined
in a similar way in the other direction.

From a language design perspective, the interesting point about these transformations is that they provide
a novel and tangible way to compare and relate the data and codata type design of the language. This paper
aims to improve on the state of the art in two regards:

\subsection*{First problem: Lack of Symmetry}
The first problem with existing language designs for data and codata types is a lack of symmetry.
The aforementioned \derefunc\ transformations can serve as a tool to evaluate symmetry: A failure to
define these transformations as total, semantics-preserving and mutually inverse functions indicate an asymmetry.
For example, while \citet{pottier06polymorphic} figured out that generalized algebraic data types (GADTs) are necessary
in order to correctly defunctionalize polymorphic functions, the requirement for total and inverse de- and refunctionalization
made it clear that the dual feature of generalized algebraic codata types (GAcoDTs) are also necessary \cite{ostermann2018dualizing}, and
what they should look like.

However, even when the design is symmetric in that sense, there is another glaring asymmetry that has been the main motivation
for this paper: The definitions of data and codata types as well as the transformations between them look very similar, but not similar
enough to have just one generic definition of a type (with data and codata being just two modes of using them) and just one transformation.

The underlying cause which we identified is the asymmetry between producers and consumers.
Programming languages whose design is based on natural deduction rules only represent producers (expressions) of a type as a first class citizen.
That is, there is only one form of typing judgement $\Gamma \vdash t : T$ for typing (producer) terms.
This bias in the design of programming languages, which is reflected in their formalizations, manifests itself in various ways:
\begin{itemize}
  \item A constructor such as ``Zero'' can be typed on its own, whereas a destructor such as ``isZero'' must be typed together with its application, e.g., ``3.isZero''.
  \item Pattern matches have a separate ``return type'' in addition to the type of the argument they pattern match on, while copattern matches lack such a separate return type.
  \item Constructors of a data type and observations of a codata type have a different shape, observations have an additional ``return type'' of the observation.
\end{itemize}
Sequent calculus is a more symmetric alternative to natural deduction in which proofs\slash{}pro\-ducers and refutations\slash{}con\-sumers are defined in a completely symmetric way.
Inspired by previous work on sequent-calculus-based languages
\cite{zeilberger2008unity,curien00duality,downen14duality,downen2019compiling,wadler03call}, we have been able to achieve our:

\begin{enumerate}
  \item[] \textbf{First contribution:} We present a novel language in which data and codata types are \emph{fully} symmetric and in fact just modes of one type definition construct.
   Defunctionalization and refunctionalization are \emph{one} algorithm.
\end{enumerate}
Programs in sequent calculus style look very different from ``normal'' functional programs (``the plumbing is exposed on the outside'' \cite{wadler03call}); it is not obvious whether
programmers want or should directly program in such languages (maybe aided by the ability to macro-expand normal expression-style syntax into
sequent calculus style) or whether they are better suited as intermediate languages. In any case, this paper should be considered
a language design experiment whose aim is to achieve the contribution just described, not necessarily to present a language in which
programmers program directly.

\subsection*{Second problem: Evaluation Order}
The second problem concerns the interaction between data and codata types, \derefunc, and evaluation order.
In languages that support both data and codata types, it is desirable to have fine-grained control over evaluation order
instead of prescribing a global fixed evaluation order:
\begin{itemize}
    \item Programmers want to use laziness for easier problem decomposition and composability of algorithms, and strictness for easier reasoning about time and space complexity.
    \item Ad hoc solutions to a globally fixed evaluation order like the ``seq'' expression in Haskell weaken the valid reasoning principles available to programmers \cite{johann2004freeTheoremsSeq}.
    \item The validity of $\eta$-rules depends both on the evaluation order of the language and the polarity of the type.
\end{itemize}
Since the last point is central to our argument, we illustrate it with two examples from the lambda calculus with pairs
(adapted from \citet{downen2019compiling}).

The $\eta$ rule for the function type (and for codata types more generally) is only valid under call-by-name evaluation order.
Consider, for example, the lambda term $(\lambda x.5)(\lambda x. \Omega x)$, which can be reduced to 5 under both the call-by-value and call-by-name evaluation order.
If we replace $\lambda x. \Omega x$ by its $\eta$-reduct $\Omega$, on the other hand, the resulting term diverges under call-by-value semantics, while it still reduces to 5 under call-by-name:

\begin{center}
  \begin{tikzcd}[row sep=-.6em,%
    /tikz/column 4/.append style={anchor=base west},%
  ]
    5 & & & 5\\
    & \ar[ul, swap, "\cbn"] \ar[dl, "\cbv"] (\lambda x.5)(\lambda x. \Omega x) \ar[r,equal, "\eta"]& (\lambda x.5) \Omega \ar[ur, "\cbn"] \ar[dr, "\cbv"'] &\\
    5 & & & \text{\emph{diverges}}
  \end{tikzcd}
\end{center}
The inverse situation holds for pairs (and data types more generally); their $\eta$-rules are only valid under call-by-value evaluation order.
The following two terms are $\eta$-equal.
But while the left term diverges under both call-by-value and call-by-name, the $\eta$-reduct on the right reduces to $5$ under call-by-name.

\begin{center}
  \begin{tikzcd}[row sep=-.6em,%
    /tikz/column 1/.append style={anchor=base east},%
    /tikz/column 4/.append style={anchor=base west},%
  ]
    \text{\emph{diverges}} & & & 5\\
    & \ar[ul, "\cbn"'] \ar[dl, "\cbv"]
        \mathbf{case}\ \Omega\  \mathbf{of}\ \{\ \langle x_1,x_2 \rangle \Rightarrow (\lambda x. 5) \langle x_1,x_2 \rangle\ \}
        \ar[r, equal, "\eta"]
      & (\lambda x. 5)\Omega
        \ar[ur, "\cbn"] \ar[dr, "\cbv"']
      &\\
      \text{\emph{diverges}} & & & \text{\emph{diverges}}
  \end{tikzcd}
\end{center}
These problems suggest that evaluation order should be based on types \cite{downen2019compiling}, together with so-called shifts (analogous to shifts in \cite{zeilberger2008unity}) to switch between different evaluation orders.
When relating data and codata types via \derefunc, however, these different evaluation orders and the shifts need to be taken
into account to preserve the semantics of the program. Previous work has sidestepped this problem by fixing one global evaluation order (either call-by-value or call-by-name).
This leads to our
\begin{enumerate}
  \item[] \textbf{Second contribution:} We give the first presentation on the interaction between data/codata types, type-based evaluation order, and \derefunc.
    We discuss, and solve, the problems which arise when we combine them.
\end{enumerate}

\subsection*{Relevance and Applications}
The main motivation for this paper is a conceptual one and stems from the desire to deepen the symmetry between data and codata types: i) to show that they can be expressed as modes of
just one type definition construct, ii) to demonstrate that variants of defunctionalization and refunctionalization can switch these modes and can be expressed as
a single algorithm, and iii) to analyze how type-based evaluation order to maintain desirable $\eta$-rules can be added to the language.

While applications of these conceptual ideas are not in the scope of this paper, we still want to mention that there are ample potential practical applications and benefits.
In general, the identification of dualities always comes with a two-for-the-price-of-one economy \cite{wadler03call}. The work presented here enables compilers, compiler optimizations,
virtual machines, static analyses, IDEs, proofs, and programmer tools to reuse the same code and ideas two times. As a concrete example, a compiler today might contain separate
optimizations that perform transformations like these:

$
\begin{array}{ccc}
(\lambda x. x) 5                   & \Rightarrow   &       5 \\
   \mathbf{match}\  \texttt{True}\ \mathbf{with}\ \{ \texttt{True} => s;\  \texttt{False} => t \}   & \Rightarrow   & s
\end{array}
$

\noindent In our system, both optimizations could be expressed as a single ``statically known con-/destructor'' rule. Even in the pedagogy of programming, a more symmetric treatment of data and codata
could lead to reuse of teaching concepts such as ``the structure of the program follows the structure of the (co)data'' \cite{gibbonsblog}.

Another potential application of this work is to use it as an intermediate language. \citet{downen2016sequent} have demonstrated that a language based on sequent calculus can be
an attractive intermediate language that lies ``somewhere in between direct and continuation-passing style'' and can, to some degree, combine the advantages of direct style
representation (such as: easy to specify rewrite rules, flexible evaluation order) with the advantages of CPS (such as: easy to express control flow). \citeauthor{downen2016sequent}'s
work was presented as an intermediate language for Haskell. What this work adds on top of the arguments presented by
\citeauthor{downen2016sequent} is the symmetric support for codata and type-based evaluation order, hence we envision that our language could be a common intermediate language for multiple different languages featuring different variants of data and
codata types and different evaluation order regimes. The intermediate language could make programs written in these different languages interact in a principled way, without
violating the invariants of the respective source languages.

\subsection*{Overview}
The rest of the article is structured as follows:
\begin{itemize}
  \item In \cref{sec:mainideas} we present our central contributions in an example-driven, informal style.
  \item In \cref{sec:formalization} we present the formalization of the CPS-fragment of the language.
    Since all programs have to be written with explicit control flow, different evaluation strategies cannot be observed.
  \item In \cref{sec:xtorization} we present the single algorithm which subsumes both defunctionalization and refunctionalization.
  \item In \cref{sec:evaluationorder} we remove the restriction to programs with explicit control flow by introducing the $\mu$ and $\tilde\mu$ constructs from \lambdamutildemu\ calculus.
    We discuss different global and type-based evaluation orders and their influence on the validity of $\eta$-equalities.
  \item In \cref{sec:xtorizationTwo} we discuss how to complement the transformations from \cref{sec:xtorization} with transformations which change the evaluation order of a given type.
  \item In \cref{sec:relatedwork} we discuss related work, and we conclude in \cref{sec:conclusion}.
\end{itemize}
The system presented in this paper has been formalized in Coq and \cref{thm:preservation,thm:progress,theorem:progsinverse,theorem:typeabilitypreservation,theorem:semantic} have been proven in Coq.
In this article we restrict ourselves to a simply-typed language, since this is sufficient to illustrate our central ideas.
We think that the generalization to a polymorphic variant of this calculus which supports generalized algebraic data types (GADTs) and their dual, GAcoDTs, does not pose substantial difficulties and expect this to be a straight-forward adaption of
the approach of \citet{ostermann2018dualizing}.

\section{Main Ideas}
\label{sec:mainideas}
In this section we will use simple examples to present the types and terms of the language, as well as the transformation algorithms.

\subsection{Programming with Symmetric Data and Codata}
Natural deduction, and the programming languages based on it, are biased towards proof.
A natural deduction derivation ending with a single formula at its root proves that formula.
Term assignment systems for natural deduction, such as the simply typed lambda calculus, therefore only have one typing judgement.
This typing judgement, usually written $\Gamma \vdash t : T$, types a term $t$ as a \emph{proof} of the type $T$.

Sequent calculus, on the other hand, is not biased towards proofs.
For example, there are not only sequent calculus derivations ending in $\vdash \phi$ which \emph{prove} a formula $\phi$, but also derivations ending in $\phi \vdash$ which \emph{refute} a formula $\phi$.
In the context of programming languages, proofs and refutations correspond to \emph{producers} and \emph{consumers} of a type.
For example, the producers of the type $\N$ are numbers $1,2,3\ldots$, whereas the consumers are \emph{continuations} expecting a natural number.
The single typing judgement $\Gamma \vdash t : T$ is replaced by two separate judgements $\Gamma \vdash p \prd T$ and $\Gamma \vdash c \con T$, one which types producers and one which types consumers.

There are exactly two kinds of types in our system, data types and codata types \citep{downen2019codata, hagino89codatatypes}, whose difference can be expressed in terms of \emph{canonical} producers and consumers.
Data types have canonical producers, which are called \emph{constructors}.
Canonical means that given an arbitrary producer, we know that it must have been built by one among a finite list of constructors.
Having only a finite list of constructors justifies the use of \emph{pattern matching} to build consumers of a data type.
Codata types, on the other hand, have canonical consumers, called \emph{destructors}; their producers are formed by \emph{copattern matching} \citep{abel13copatterns} on all destructors.

First-class support for both producers and consumers is necessary to make data and codata types completely symmetric.
For data types, constructors can be typed as producers and pattern matches as consumers.
Dually, destructors of a codata type can be typed as consumers, and copattern matches as producers.
Without a first-class representation of consumers we would have to treat data and codata types asymmetrically: both pattern matches and destructors need an additional return type.
We will now illustrate symmetric data and codata types with some simple examples.

The data type $\N$ is defined by two constructors \texttt{Zero} and \texttt{Suc}, the canonical producers of $\mathbb{N}$.
Since we distinguish producers from consumers, we have to explicitly mark the argument of the constructor \texttt{Suc} as a producer.
{\footnotesize
  \begin{align*}
    \data\ \mathbf{type}\ \N\ \{\ \texttt{Zero}\ ;\ \texttt{Suc}(x \prd \N)\ \}
  \end{align*}
}%
Using this definition, we can now form the following producer and consumer of $\N$:
{\footnotesize
  \begin{gather*}
    \texttt{Suc}(\texttt{Zero}) \prd \N \\
    \mathbf{match}_{\data}\ \N\ \{\ \texttt{Zero}\ \Rightarrow\ \ldots\ ;\ \case{Suc}{x \prd \N}{\ldots}\ \} \con \N
  \end{gather*}
}%
The definition of the consumer is still incomplete; we have not yet specified what should be inserted in the place of the two holes.
We do not have a ``result type'' which would determine the type of terms to put in the holes; rather, the category of terms to insert in these
places are called \emph{commands}, which we discuss next.

\emph{Commands}, sometimes also called ``statements'', are the syntactical category of reducible expressions; a closed command corresponds to the state of an abstract machine.
We consider only two types of command.
A logical command $p \gg c$ combines a producer $p$ and consumer $c$ of the same type.
Logical commands are evaluated using standard (co)pattern matching evaluation rules.
The second type of command is \Done\ which just terminates the program.
The addition of \Done\ is necessary since the Curry-Howard interpretation of a command is a contradiction \cite{zeilberger2008unity}.
The only way to write a closed command is therefore to write a looping program using unrestricted recursion, or to postulate the existence of a closed command, which we have done here.
We now present the first simple example of a complete program which reduces to \Done\ in a single step before terminating.
{\footnotesize
  \begin{align*}
    \texttt{Suc}(\texttt{Zero}) \gg \mathbf{match}_\data\ \N\ \{\ \texttt{Zero}\ \Rightarrow\ \Done\ ;\ \case{Suc}{x}{\Done} \}
  \end{align*}
}%

We now consider some examples of codata types.
Codata types are a powerful addition to statically typed programming languages, since they subsume a wide variety of different features.

The first example of a codata type, and the only codata type in many functional programming languages, is the function type.
Codata types allow the function type to be user-defined instead of being hardwired into the language.
The type of functions from \N\ to \N\ is represented by a codata type with one destructor \texttt{Ap}.
This destructor corresponds to the only way a function can be used, namely to apply it to an argument.
In the symmetric setting \texttt{Ap} takes two arguments, the producer argument $x$ for the value passed to the function, and the consumer argument $k$ for the consumer to be used on the result of the evaluation of the function\footnote{Readers familiar with linear logic might recognize this as the following decomposition of the function type: $\phi \multimap \psi = (\phi \otimes \psi^\bot)^\bot$.}.
\newcommand{\nattonat}{\N\hspace{-0.08cm}\shortrightarrow\hspace{-0.08cm}\N}
{\footnotesize
  \begin{align*}
    \codata\ \mathbf{type}\ \nattonat\ \{\ \texttt{Ap}(x \prd \N, k \con \N)\ \}
  \end{align*}
}%
The identity function $\lambda x.x$ can be written as a comatch.
{\footnotesize
  \begin{align*}
    \text{id} \coloneq \mathbf{match}_\codata\ \nattonat\ \{\ \texttt{Ap}(x,k) \Rightarrow x \gg k\ \} \prd \nattonat
  \end{align*}
}%
Codata types also allow for easy and intuitive programming with infinite structures.
For example, the type of streams of natural numbers is defined by the two destructors which give the head and the tail of a stream.
Note that the \texttt{Head} destructor does not directly return the first element; instead, a continuation for \N\ has to be passed as an argument.
\newcommand{\natstream}{\ensuremath{\N\hspace{-0.08cm}-\hspace{-0.08cm}\text{Stream}}}
{\footnotesize
  \begin{align*}
    &\codata\ \mathbf{type}\ \natstream\ \{\ \texttt{Head}(k \con \N)\ ;\ \texttt{Tail}(k \con \natstream)\ \}
   \end{align*}
}%
Codata also formalizes one essential aspect of object-oriented programming: programming against an interface \citep{cook90object, cook09understanding}.
We give a simple example of a customer ``interface'' with name and address fields; an ``object'' that implements the interface
could again be constructed with a copattern match.
{\footnotesize
  \begin{align*}
    &\codata\ \mathbf{type}\ \text{Customer}\ \{\ \texttt{Name}(k \con \text{String})\ ;\ \texttt{Address}(k \con \text{Address})\ \}
  \end{align*}
}%
In the next section we will see how symmetric data and codata types can be transformed back and forth other using the \derefunc\ algorithms, and why
we only need a single algorithm in the novel symmetric setting.

\subsection{Defunctionalization and Refunctionalization}
Having symmetric data and codata types turns \derefunc\ into \emph{one} algorithm, like we promised in the introduction.
We illustrate this with the example program in \cref{fig:natNumbers}.
In \cref{fig:natNumbers:data} we present \N\ as a data type, in \cref{fig:natNumbers:codata} as a codata type and in \cref{fig:natNumbers:matrix} as a matrix.
\Derefunc\ is now just matrix transposition; the terms themselves remain unchanged. In the asymmetric setting of previous work \cite{uroboro2015,ostermann2018dualizing,Uroboro2019}, complications due to the asymmetry prevented this easy formulation. For instance, asymmetric destructors are defunctionalized to functions with a ``special'' argument (the \texttt{this} object, in OO terminology), so different
kinds of function declarations and function calls had to be distinguished in the formalization.
\begin{figure}[bp]
  \begin{subfigure}{0.47\linewidth}
    {\footnotesize
      \begin{flalign*}
        \quad&\data\ \N\ \mathbf{where}&\\[-4pt]
        &\quad \texttt{Zero} \\[-4pt]
        &\quad \texttt{Suc}(x \prd \N) \\
        \intertext{\emph{with consumers}}
        &\texttt{pred}(k \con \N) \coloneq \match{\data}\ \N \\[-4pt]
        &\quad \texttt{Zero} \Rightarrow \texttt{Zero} \gg k \\[-4pt]
        &\quad \texttt{Suc}(x) \Rightarrow x \gg k \\
        &\texttt{add}(y \prd \N, k \con \N) \coloneq \match{\data}\ \N \\[-4pt]
        &\quad \texttt{Zero} \Rightarrow  y \gg k \\[-4pt]
        &\quad \texttt{Suc}(x) \Rightarrow x \!\gg \texttt{add}(\texttt{Suc}(y), k)
      \end{flalign*}
    }%
    \caption{Type $\N$ in data form.}
    \Description{Code snippet of a program where the type of natural numbers is in data form.}
    \label{fig:natNumbers:data}
  \end{subfigure}
  \hspace{-.3em}
  \begin{subfigure}{0.49\linewidth}
    {\footnotesize
      \begin{flalign*}
        \quad&\codata\ \N\ \mathbf{where}&\\[-4pt]
        &\quad \texttt{pred}(k \con \N) \\[-4pt]
        &\quad \texttt{add}(y \prd \N, k \con \N) \\
        \intertext{\emph{with producers}}
        &\texttt{Zero} \coloneq \match{\codata}\ \N \\[-4pt]
        &\quad \texttt{pred}(k) \Rightarrow \texttt{Zero} \gg k \\[-4pt]
        &\quad \texttt{add}(y,k) \Rightarrow y \gg k \\
        &\texttt{Suc}(x \prd \N) \coloneq \match{\codata}\ \N \\[-4pt]
        &\quad \texttt{pred}(k) \Rightarrow  x \gg k \\[-4pt]
        &\quad \texttt{add}(y,k) \Rightarrow x \!\gg \texttt{add}(\texttt{Suc}(y), k)
      \end{flalign*}
    }%
    \caption{Type $\N$ in codata form.}
    \Description{Code snippet of a program where the type of natural numbers is in codata form.}
    \label{fig:natNumbers:codata}
  \end{subfigure}

  \noindent
  \begin{subfigure}{0.49\textwidth}
    {\footnotesize
    \[
    \begin{array}{c|cc}
      \N& \texttt{Zero} & \texttt{Suc}(x \prd: \N) \\
      \hline
      \texttt{pred}(k \con \N) & \texttt{Zero} \gg k  & x \gg k \\
      \texttt{add}(y \prd \N, k \con \N) & y \gg k & x \gg \texttt{add}(\texttt{Suc}(y), k)
    \end{array}
    \]
    }
    \caption{Type $\N$ in matrix form.}
    \Description{The type of natural numbers presented as a matrix.}
    \label{fig:natNumbers:matrix}
  \end{subfigure}
  \caption{Type $\N$ in data, codata and matrix form.}
  \label{fig:natNumbers}
\end{figure}

For the system presented so far, and for the constructs used in \cref{fig:natNumbers}, it is clear after some reflection that \derefunc\ is semantics-preserving.
This is essentially due to the fact that the program still contains the same ``rewrites'' (cases of (co)pattern matches) - only the place where the rewrites are defined changes -, and there is no notion of evaluation order; all control flow is completely explicit.
In the next subsection we introduce additional control flow constructs which introduce the need for evaluation strategies.

\subsection{Evaluation Order}
\label{subsec:mainideas:evaluationorder}

In order to speak about evaluation order, we need to add additional constructs which let us form new commands for which the evaluator has to make a \emph{choice} on what to evaluate next.
We do this by introducing the $\mu$ and $\tilde\mu$ abstractions from the \lambdamutildemu\ calculus of \citet{curien00duality}.
These two constructs introduce new ways to form producers and consumers.

A $\mu$-abstraction $\muprdabs{x}{T}{c}$ is a producer of $T$ which abstracts over a consumer $x$ of $T$ in the command $c$.
Dually, a $\tilde\mu$-abstraction $\muconabs{x}{T}{c}$ is a consumer of $T$ which abstracts over a producer $x$ of $T$.
Note that in distinction to \cite{curien00duality} we use different type annotations on the variable instead of the tilde to distinguish the two kind of abstractions.
This allows us to generalize over them syntactically in the formalization.

We illustrate their intuitive meaning using the data type of natural numbers and the functions given in \cref{fig:natNumbers:data}.
The only terms that can be typed as producers of \N\ in the system without $\mu$ abstractions are generated by the grammar $v \coloneq \text{Zero}\, |\, \text{Suc}(v)\, |\, x$.
In particular, there was no way to type an analogue of $2+2$ as a producer of \N\ with is not yet fully evaluated.
Using a $\mu$-abstraction, we can now form the producer $\muprdabs{k}{\N}{(2 \gg \text{add}(2,k))}$.
The $\tilde\mu$ abstraction for \N, on the other hand, behaves more like a let-binding for producers in a command.
For example, the command $5 \gg \muconabs{x}{\N}{c}$ behaves like $\mathbf{let}\ x \prd \N \coloneq 5\ \mathbf{in}\ c$.

Once we have both $\mu$ and $\tilde\mu$ abstractions in the language confluence is lost, as witnessed by the following critical pair.
In this example, $\Omega$ refers to some unspecified non-terminating command.
Depending on which abstraction we evaluate first, we obtain either $\Omega$ or $\Done$.
This is clearly unsatisfactory.
\begin{equation}
  \begin{tikzcd}[column sep=huge]
    \Omega & \ar[l,"\text{\parbox{1.5cm}{\centering substitute $x$\\[-.8em]in producer}}"', "\reducestoinv"]
              \muprdabs{x}{T}{\Omega} \gg \muconabs{y}{T}{\Done}
              \ar[r,"\text{\parbox{1.5cm}{\centering substitute $y$\\[-.8em]in consumer}}", "\reducesto"']
           & \Done
  \end{tikzcd}
\label{eq:critical-pair}
\end{equation}
We regain confluence by prescribing an evaluation strategy for the redex above; either call-by-value or call-by-name.
In call-by-value, only producers which belong to the more restrictive grammar, not containing $\mu$-abstractions, can be substituted for producer variables.
In particular, we cannot substitute the producer corresponding to $2+2$ for a producer variable of type \N.
More generally:
Under call-by-value we evaluate $\mu$-abstractions before $\tilde\mu$ abstractions, and conversely for call-by-name.

This leads to the problem of which evaluation order to choose for the system.
This choice should strive to maximize the number of valid $\eta$-equalities.
For example, the $\eta$ equality for the function type, assuming that $x$ and $k$ do not occur free in $e$, is:
\begin{align*}
   \match{\codata}\ \nattonat\ \{\ \texttt{Ap}(x,k) \Rightarrow e \gg \texttt{Ap}(x,k)\ \} \etaeq e
\end{align*}
This rule makes the following two commands $\eta$-equivalent:
\begin{gather*}
  \match{\codata}\ \nattonat\ \{ \texttt{Ap}(x,k) \Rightarrow \muprdabs{y}{\nattonat}{\Omega} \gg \texttt{Ap}(x,k) \} \gg \muconabs{x}{\nattonat}{\Done} \\
  \muprdabs{y}{\nattonat}{\Omega} \gg \muconabs{x}{\nattonat}{\Done}
\end{gather*}
But under the call-by-value strategy they evaluate to different results.
In summary, we can thus observe that generally, $\eta$-rules are only valid for data types when evaluated under call-by-value, while they are only valid for codata types when evaluated under call-by-name.
We can turn this observation into an evaluation strategy:
Under the \emph{polar} evaluation order, all data types are evaluated using call-by-value and all codata types using call-by-name.

\subsection{\Derefunc\ and evaluation order}
\label{subsec:mainideas:derefunceval}
What consequences does the introduction of evaluation order have for \derefunc?
The central observation is that we can combine the simple \derefunc\ algorithm with global \cbv\ or \cbn, \textit{but we cannot combine this simple algorithm with the polar evaluation order.}
In order to see this, consider the critical pair in \cref{eq:critical-pair}.
Whatever evaluation order we choose for this redex, we must use the same evaluation order for it after defunctionalizing or refunctionalizing the type $T$.
This is guaranteed if we fix a global evaluation order, but it fails if the evaluation order is dependent on whether $T$ is a data or codata type.

The following table summarizes this situation.
If we choose call-by-value we loose the $\eta$-equalities for codata types, and if we choose call-by-name we loose the $\eta$-equalities for data types.
But in both cases we can use the simple \derefunc\ algorithm described above.
If we choose the polar evaluation order, on the other hand, all $\eta$-equalities are valid, but the \derefunc\ algorithm is \emph{no longer semantics-preserving}.

\begin{center}
  \begin{tabular}{cccc}
    \toprule
    Eval Order & $\eta$ for data & $\eta$ for codata & De/Refunc. \\
    \midrule
    Global \cbv & \cmark & \xmark & \cmark \\
    Global \cbn & \xmark & \cmark & \cmark \\
    Polar & \cmark & \cmark & \xmark \\
    \bottomrule
  \end{tabular}
\end{center}

How do we combine \derefunc\ with the polar evaluation order, which validates all $\eta$-equalities?
To solve this problem, we introduce a fourth evaluation order: the ``nominal'' evaluation order, where each type explicitly declares whether it should be evaluated by-value or by-name.
This scheme allows the declaration of call-by-name data types and call-by-value codata types.
However, we consider these two new types to be mere intermediate steps in the translation from call-by-value data types to call-by-name codata types, and vice versa, since they are not well-behaved when we consider their $\eta$-laws.

\subsection{Summary}%
\label{subsec:mainideas:summary}

\Cref{fig:commutativediagram} gives a concise summary of this paper.
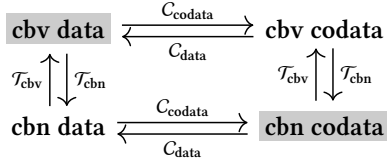
\begin{figure}[h]
  \centering
  \begin{tikzcd}
    \greybox{\cbv\ \data} \ar[rr, shift left=0.5ex, "\refunctionalize"] \ar[d, shift left=0.75ex, "\cbvtocbn"] & &
    \cbv\ \codata \ar[ll, shift left=0.5ex, "\defunctionalize"] \ar[d, shift left=0.75ex, "\cbvtocbn"] \\
    \cbn\ \data \ar[rr, shift left=0.5ex, "\refunctionalize"] \ar[u, shift left=0.75ex, "\cbntocbv"] & &
    \greybox{\cbn\ \codata} \ar[ll, shift left=0.5ex, "\defunctionalize"] \ar[u, shift left=0.75ex, "\cbntocbv"]
  \end{tikzcd}
  \caption{All possible transformations of a type.}
\label{fig:commutativediagram}
\end{figure}

We factor the \emph{full \derefunc} transformation $\fxfunctionalize$ into two simpler transformations: \emph{core defunctionalization}\footnote{In the following, we will often refer to core \derefunc\ simply as just \derefunc.} $\defunctionalize$  and \emph{core refunctionalization} $\refunctionalize$ exchange data and codata, as well as \emph{\cbxtocby}, $\cbvtocbn$ and $\cbntocbv$.
The former will perform the essential part of \derefunc, i.e. exchanging data with codata types and vice versa without any changes to the evaluation strategy.
Correspondingly, the latter will only change the evaluation strategy, while keeping the semantics of all existing expressions intact.
For this transformation we have to add appropriate \emph{shift types} $\cbnshift{T}$ and $\cbvshift{T}$ to the program for every type $T$.
Shift types were originally introduced by \citet{girard2001} in the context of ludics and polarized logic, where they mediate between positive and negative types.
Their use to control the evaluation order has been described more accessibly by \citet{zeilberger2008unity} and \citet{beyondPolarity}.
These transformations make the diagram in \cref{fig:commutativediagram} commute up to the insertion of some double-shifts, which can be removed in the special case of a mere round-trip.
This results in the full defunctionalization $\fdefunctionalize = \cbntocbv \circ \defunctionalize$ and full refunctionalization $\frefunctionalize = \cbvtocbn \circ \refunctionalize$.

An example of a program which is transformed along the path \emph{top left} $\to$ \emph{top right} $\to$ \emph{bottom right} in \cref{fig:commutativediagram} can be found in \cref{fig:exampletrans}.
Note that all \texttt{main} commands evaluate to \Done.

\begin{figure}[tbp]
  \begin{subfigure}{.9\linewidth}
    {\footnotesize
      \begin{align*}
        & \cbv\ \data\ \mathbf{type}\ \N\ \{\ \texttt{Zero}\ ;\ \texttt{Suc}(x \prd \N)\ \} \\
        & \texttt{pred}(k \con \N) \coloneq \match{\data}\ \N \\[-.4em]
        & \quad \texttt{Zero} \Rightarrow \texttt{Zero} \gg k\\[-.4em]
        & \quad \texttt{Suc}(n) \Rightarrow n \gg k\\
        & \texttt{main} \coloneq \muprdabs{k}{\N}{\Done} \gg \muconabs{n}{\N}{\Omega}
      \end{align*}
    }%
    \subcaption{Original program.}
  \end{subfigure}

  \hfill
  \begin{subfigure}{.44\linewidth}
    {\footnotesize
      \begin{align*}
        & \cbv\ \codata\ \mathbf{type}\ \N\ \{\ \texttt{pred}(k \con \N)\ \} \\
        \\
        & \texttt{Zero} \coloneq \match{\codata}\ \N \\[-.4em]
        & \quad \texttt{pred}(k)\Rightarrow \texttt{Zero} \gg k\\[-.4em]
        & \texttt{Suc}(n \prd \N) \coloneq \match{\codata}\ \N \\[-.4em]
        & \quad \texttt{pred}(k) \Rightarrow n \gg k\\
        & \texttt{main} \coloneq \muprdabs{k}{\N}{\Done} \gg \muconabs{n}{\N}{\Omega}
      \end{align*}
    }%
    \subcaption{Intermediate step after refunctionalizing (a) by applying \refunctionalize.}
  \end{subfigure}
  \hfill
  \begin{subfigure}{.45\linewidth}
    {\footnotesize
      \begin{align*}
        & \cbn\ \codata\ \mathbf{type}\ \N\ \{\ \texttt{pred}(k \con \cbvshift{\N})\ \} \\[-.4em]
        & \cbv\ \data\ \mathbf{type}\ \cbvshift{\N}\ \{\ \cbvterm{x \prd \N}\ \} \\
        & \texttt{Zero} \coloneq \match{\codata}\ \N \\[-.4em]
        & \quad \texttt{pred}(k)\Rightarrow \cbvterm{\texttt{Zero}} \gg k\\[-.4em]
        & \texttt{Suc}(n \prd \cbvshift{\N}) \coloneq \match{\codata}\ \N \\[-.4em]
        & \quad \texttt{pred}(k) \Rightarrow n \gg k\\
        & \texttt{main} \coloneq \muprdabs{k}{\cbvshift\N}{\Done} \gg \muconabs{n}{\cbvshift\N}{\Omega}
      \end{align*}
    }%
    \subcaption{Final result after performing cbv-to-cbn translation \cbvtocbn\ of (b).}
  \end{subfigure}%
  \hspace*{\fill}
  \caption{A simple program which is transformed without changing semantics using $\frefunctionalize = \cbvtocbn \circ \refunctionalize$.}
\label{fig:exampletrans}
\end{figure}

\section{Formalization}
\label{sec:formalization}
In this section we will formally describe the syntax, type system and operational semantics of the CPS fragment of our language.
We will extend this language in \cref{sec:evaluationorder} with constructs which allow to write programs in direct style.
We use the notation $\overline{X}$ to represent a (possibly empty) sequence $X_1,\ldots,X_i,\ldots,X_n$.
We follow Featherweight Java~\citep{igarashi2001featherweight} conventions for this notation.
In this convention, multiple occurrences of such sequences should be read as being indexed simultaneously, e.g. the judgement ``$\Gamma, \overline{\Pi} \vdash \overline{c} \cmd{}$'' should be read as a list of judgements ``$\Gamma, \Pi_i \vdash c_i \cmd{}$''.

The inherent symmetry of the system allows us to minimize the number of rules, since we don't have to write down separate rules for data and codata types, matches and comatches, constructors and destructors.
In order to do this, we introduce polarities $p$ which can either be $\data$ or $\codata$.
For example, a $\match{\data}$ is a pattern match and a $\match{\codata}$ is a copattern match and so on.
The syntax of our language is defined in \cref{fig:formalization:terms}, and the typing and well-formedness rules are given in \cref{fig:formalization:types,fig:formalization:well-formed}, which we will now explain in turn.

We use the set \textsc{Tname} for names of types, and the set \textsc{Name} for for names of constructors, destructors and functions.
Whenever something can stand for either a constructor or destructor, we refer to t as an \emph{xtor}.
All the typing rules implicitly have a program in their context, and we use lookup functions to obtain various information about the declarations in that program.

\begin{figure}[htbp]
  \begin{minipage}{\linewidth}
    \[
      \begin{array}{lclr}
        \multicolumn{4}{r}{x,y \in \textsc{Var} \quad T \in \textsc{Tname} \quad \mathcal{X}, \mathcal{Y} \in \textsc{Name} \mspace{100mu}   \emph{Variables and Names}} \\[0.1cm]

        p & \Coloneqq & \data \sep \codata & \emph{Polarity} \\
        s & \Coloneqq & \cbv \sep \cbn & \emph{Evaluation Strategy} \\
        o & \Coloneqq & \mathbf{prd} \sep \mathbf{con} & \emph{Orientation} \\[0.2cm]
        \Gamma, \Delta, \Pi & \Coloneqq & \diamond \sep \Gamma, x \prdcon T  & \emph{Contexts} \\
        \sigma,\tau & \Coloneqq & () \sep (\sigma, e)  & \emph{Substitutions} \\[0.2cm]

        e & \Coloneqq & x \sep \mathcal{X}\sigma \sep \match{p}\, T\, \lbrace\, \overline{\mathcal{X}\Delta \Rightarrow c}\, \rbrace & \emph{Expressions} \\
        c & \Coloneqq &  e \gg e  \sep \Done & \emph{Commands}  \\[0.2cm]

        d & \Coloneqq & s\ p\ \mathbf{type}\ T\ \lbrace\ \overline{\mathcal{X}\Delta}\ \rbrace\ \mathbf{with}\ \overline{f} & \emph{Type declarations}  \\
        f & \Coloneqq & \mathcal{X}\Pi \coloneq \match{p}\ T\ \{ \overline{\mathcal{Y}\Delta \Rightarrow c} \} & \emph{Functions} \\
        P & \Coloneqq & (\overline{d},\ c) & \emph{Program} \\[0.2cm]
      \end{array}
    \]
  \end{minipage}
  \small
  \begin{align*}
    \valfun{\data} &\coloneq \mathbf{prd}
      & \cntfun{\data} &\coloneq \mathbf{con}
      & \widehat{\data} &\coloneq \codata 
      & \widehat{\cbv} &\coloneq \cbn 
      &\widehat{\mathbf{prd}} &\coloneq \mathbf{con}\\[-.3em]
    \valfun{\codata} &\coloneq \mathbf{con}
      & \cntfun{\codata} &\coloneq \mathbf{prd}
      & \widehat{\codata} &\coloneq \data
      & \widehat{\cbn} &\coloneq \cbv
      & \widehat{\mathbf{con}} &\coloneq \mathbf{prd}
  \end{align*}%
  \caption{Program declarations and terms, and some helper functions.}
  \Description{Program declarations and terms.}
  \label{fig:formalization:terms}
\end{figure}

\begin{figure}[p!]
  \begin{flushright}
    \fbox{Context Formation: $\vdash \Gamma$}
  \end{flushright}
  \begin{minipage}{0.24\linewidth}
    \begin{prooftree}
      \AxiomC{\phantom{$\Gamma$}}
      \RightLabel{\textsc{T-Ctx}$_1$}
      \UnaryInfC{$\vdash \diamond$}
    \end{prooftree}
  \end{minipage}
  \begin{minipage}{0.64\linewidth}
    \begin{prooftree}
      \AxiomC{$\vdash \Gamma \quad T \in \text{Prog} \quad x \not\in \text{dom}(\Gamma)$}
      \RightLabel{\textsc{T-Ctx}$_2$}
      \UnaryInfC{$\vdash \Gamma, x \prdcon T$}
    \end{prooftree}
  \end{minipage}
  \begin{flushright}
    \fbox{Substitution typing: $\Gamma \vdash \sigma : \Delta$}
  \end{flushright}
  \begin{minipage}{0.4\textwidth}
    \begin{prooftree}
      \AxiomC{\phantom{$\Gamma$}}
      \RightLabel{\textsc{T-Subst}$_1$}
      \UnaryInfC{$\Gamma \vdash () : \diamond$}
    \end{prooftree}
  \end{minipage}
  \begin{minipage}{0.5\textwidth}
    \begin{prooftree}
      \AxiomC{$\Gamma \vdash \sigma : \Delta$}
      \AxiomC{$\Gamma \vdash e \prdcon T$}
      \AxiomC{$\substitutable(e)$}
      \RightLabel{\textsc{T-Subst}$_2$}
      \TrinaryInfC{$\Gamma \vdash (\sigma, e) : \Delta, x \prdcon T$}
    \end{prooftree}
  \end{minipage}
  \begin{flushright}
    \fbox{Expression typing: $\Gamma \vdash e \prdcon T$}
  \end{flushright}
  \begin{prooftree}
    \AxiomC{$\Gamma(x) = (o,T)$}
    \RightLabel{\textsc{T-Var}}
    \UnaryInfC{$\Gamma \vdash x \prdcon T$}
  \end{prooftree}

 \begin{prooftree}
    \AxiomC{$\mathcal{X}\Delta \in \text{Xtors}(T)$}
    \AxiomC{$\text{Polarity}(T) = p$}
    \AxiomC{$\Gamma \vdash \tau : \Delta$}
    \RightLabel{\textsc{T-Xtor}}
    \TrinaryInfC{$\Gamma \vdash \mathcal{X}\tau \overset{\valfun{p}}{:} T$}
  \end{prooftree}

  \begin{prooftree}
    \AxiomC{$\mathcal{X}\Pi \coloneq \ldots \in \text{Funs}(T)$}
    \AxiomC{$\text{Polarity}(T) = p$}
    \AxiomC{$\Gamma \vdash \tau : \Pi$}
    \RightLabel{\textsc{T-Fun}}
    \TrinaryInfC{$\Gamma \vdash \mathcal{X}\tau \overset{\cntfun{p}}{:} T$}
  \end{prooftree}

  \begin{prooftree}
    \AxiomC{$\Gamma, \overline{\Delta} \vdash \overline{c} \cmd{}$}
    \AxiomC{(Side condition: see text)}
    \RightLabel{\textsc{T-Match}}
    \BinaryInfC{$\Gamma \vdash \match{p}\, T\ \lbrace\,  \overline{\mathcal{X}\Delta \Rightarrow c}\, \rbrace \overset{\cntfun{p}}{:} T$}
  \end{prooftree}

  \begin{flushright}
    \fbox{Command typing: $\Gamma \vdash c \cmd$}
  \end{flushright}
  \begin{minipage}{0.59\linewidth}
    \begin{prooftree}
      \AxiomC{$\Gamma \vdash e_1 \prd T$}
      \AxiomC{$\Gamma \vdash e_2 \con T$}
      \RightLabel{\textsc{T-Cut}}
      \BinaryInfC{$\Gamma \vdash e_1 \gg e_2 \cmd$}
    \end{prooftree}
  \end{minipage}
  \begin{minipage}{0.39\linewidth}
    \begin{prooftree}
      \AxiomC{\phantom{$\prd$}}
      \RightLabel{\textsc{T-Done}}
      \UnaryInfC{$\Gamma \vdash \Done \cmd$}
    \end{prooftree}
  \end{minipage}
  \caption{Typing rules.}
  \label{fig:formalization:types}
\end{figure}

\begin{figure}[p!]
  \begin{prooftree}
    \AxiomC{$\Pi \vdash \match{p}\ T\ \{ \overline{\mathcal{X}\Delta \Rightarrow c} \} \overset{\cntfun{p}}{:} T$}
    \RightLabel{\textsc{Wf-Fun}}
    \UnaryInfC{$\vdash \mathcal{X}\Pi \coloneq \match{p}\ T\ \{ \overline{\mathcal{X}\Delta \Rightarrow c} \} \overset{\cntfun{p}}{:} T\ \OK$}
  \end{prooftree}

  \begin{minipage}{0.5\textwidth}
    \begin{prooftree}
      \AxiomC{$\vdash \overline{\Gamma}$}
      \AxiomC{$\vdash \overline{f}\ \OK$}
      \RightLabel{\textsc{Wf-Type}}
      \BinaryInfC{$\vdash s\ p\ \mathbf{type}\ T\ \lbrace\ \overline{\mathcal{X}\Delta}\ \rbrace\ \mathbf{with}\ \overline{f}\ \OK $}
    \end{prooftree}
  \end{minipage}
  \begin{minipage}{0.4\textwidth}
    \begin{prooftree}
      \AxiomC{$\vdash \overline{d}\ \OK$}
      \AxiomC{$\vdash c \cmd$}
      \RightLabel{\textsc{Wf-Prog}}
      \BinaryInfC{$\vdash (\overline{d}, c)\ \OK$}
    \end{prooftree}
  \end{minipage}
  \caption{Well-formedness rules.}
  \label{fig:formalization:well-formed}
\end{figure}

\begin{figure}[p!]
  \[
    \begin{array}{ccc}
    & \mathcal{X}\sigma \mkCmd \match{\data}\, T\, \lbrace\, \mathcal{X}\Delta \Rightarrow c;\, \ldots\,\rbrace\ \reducesto c\, \sigma & \textsc{Match} \\
    & \match{\codata}\, T\, \lbrace\, \mathcal{X}\Delta \Rightarrow c;\, \ldots\,\rbrace\ \mkCmd \mathcal{X}\sigma \reducesto c\, \sigma & \textsc{Comatch} \\
    (\mathcal{X}\Pi \coloneq \mathcal{Y}\Delta \Rightarrow c;\ldots) \in \textsc{Prog}
      & \mathcal{Y}\tau \mkCmd \mathcal{X}\sigma\ \reducesto\ (c\ \tau)\,\sigma & \textsc{ConCall} \\
    (\mathcal{X}\Pi \coloneq \mathcal{Y}\Delta \Rightarrow c;\ldots) \in \textsc{Prog}
      & \mathcal{X}\sigma \mkCmd \mathcal{Y}\tau \reducesto\ (c\ \tau)\,\sigma & \textsc{PrdCall}
    \end{array}
  \]
  \caption{Operational Semantics.}
  \label{fig:formalization:semantics}
\end{figure}

\subsection{Type declarations and the program}
\label{subsubsec:formalization:program}

A \emph{program} $P$ consists of a list of data and codata type declarations $d$ and an entry point in the form of a top-level command.
In order to check that a program is well-formed, we have to use the rule \textsc{Wf-Prog} to check that the entry point typechecks as a command, which in turn uses the rule \textsc{Wf-Type} to check that each of the type declarations is correct.
Note that we do not consider declarations to be ordered and all data and codata types, constructors, destructors and functions may reference one another in mutual recursion.
We also require all names that are used to be unambiguous, but we don't write down the obvious rules.

A \emph{type} declaration $s\ p\ \mathbf{type}\ T\ \lbrace\ \overline{\mathcal{X}\Delta}\ \rbrace\ \mathbf{with}\ \overline{f}$ introduces a data or codata type $T$ (with evaluation strategy $s$) by specifying both its xtors $\mathcal{X}_i\Delta_i$ and a list of functions $f_i$ which pattern match on its xtors.
The only reason why the xtors of a type have to be declared together with the functions matching on them is to allow for a simpler presentation of the algorithm in \cref{sec:xtorization}; a real programming language implementing these ideas would not use this restriction.
Each function declaration $\mathcal{Y}\Pi \coloneq \match{p}\,T\, \{ \overline{\mathcal{X}\Delta} \Rightarrow \overline{c} \}$ declares a function $\mathcal{X}$ with arguments $\Pi$ by (co)pattern matching on all xtors of the type $T$ to which they belong.
The rule \textsc{Wf-Fun} uses the expression typing judgement introduced above to typecheck these global (co)pattern matches.

\subsection{Contexts and Substitutions}
\label{subsec:formalization:contexts}

The definition of data and codata types, constructors and destructors, pattern matching and copattern matching can be formulated more concisely and uniformly by using \emph{contexts} and \emph{substitutions}.
Consider the example program from \cref{fig:natNumbers:data}, where natural numbers are defined by two constructors: \texttt{Zero} and \texttt{Suc}.
\texttt{Zero} does not bind anything, but \texttt{Suc} is defined using the context $\Delta = x \prd \mathbb{N}$.
This context $\Delta$ determines simultaneously that pattern matching on a \texttt{Suc} constructor extends the context by $\Delta$, and that in order to construct a natural number with \texttt{Suc}, a substitution $\sigma$ for $\Delta$ has to be provided.

Contexts map variables from the set \textsc{Var} to their type and orientation (producer or consumer) and are constructed according to the rules \textsc{T-Ctx}$_1$ and \textsc{T-Ctx}$_2$ from \cref{fig:formalization:types}.
A substitution $\sigma$ for a context $\Delta$ consists of one expression for each variable in $\Delta$.
We use the typing judgement $\Gamma \vdash \sigma : \Delta$ to express that each expression in $\sigma$ can be typed in the context $\Gamma$ with the type and orientation specified in the context $\Delta$.
The meaning of the $\substitutable(e)$ premise in the rule \textsc{T-Subst}$_2$ will be explained in \cref{sec:evaluationorder} and is irrelevant for now, as it holds for all expressions in the current setting.
Note that there is an obvious identity substitution \idmorphism{\Gamma} for every context which satisfies $\Pi, \Gamma \vdash \idmorphism{\Gamma} : \Gamma$.

\subsection{Expressions and Commands}
\label{subsubsec:formalization:expressions}

While we have only one syntactic category $e$ of expressions, there are two different typing judgements for expressions; expressions can either be typed as a producer with $\Gamma \vdash e \prd T$ or as a consumer with $\Gamma \vdash e \con T$.
In all, there are four different rules which govern the typing of expressions.

The rule \textsc{T-Var} is quite self-explanatory, we just have to look up both the type and the orientation of the variable in the context.

The rule \textsc{T-Xtor} covers the typing of both constructors and destructors, which we collectively call \emph{xtors}.
Constructors have to be typed as producers, and destructors as consumers; the rule accomplishes this by looking up the type $T$ to which the xtor belongs in the program, and the polarity $p$ of that type.
Looking up the xtor in the program also tells us what arguments we have to provide in the substitution $\tau$, and the helper function $\valfun{p}$ guarantees that the result is typechecked with the correct orientation.

The rule \textsc{T-Fun} is very similar to the rule \textsc{T-Xtor}.
Instead of typechecking constructors and destructors, it governs the call of functions declared in the program.
The difference between \textsc{T-Xtor} and \textsc{T-Fun} is that we don't look up the signature of a constructor or destructor, but the declaration of a function, i.e.\ its signature together with its body.
If the polarity of the type to which the function belongs is \data, then the function is defined by pattern matching and the function call has to be typed as a consumer; similarly, if the polarity is \codata, the function is defined by copattern matching and the call must be typed as a producer.
The helper function $\cntfun{p}$ ensures that this is the case.

The rule \textsc{T-Match} covers local pattern and copattern matches.
In that rule we have to check that the right-hand side of each case typechecks as a command.
In each case we extend the outer context with the context bound by the constructor or destructor.
Pattern matching has to be exhaustive, and the arguments bound in the case have to be identical to the ones declared in the program.
We have omitted these requirements in the formulation of the rule \textsc{T-Match} in order to keep it more legible.

There are two ways to form commands.
A logical command $e_1 \gg e_2$ consists of two expressions; the expression $e_1$ has to typecheck as producer and the expression $e_2$ as a consumer of the same type.
The rule is named after the ``Cut'' rule from sequent calculus, to which it corresponds.
The command \Done\ typechecks in any context and corresponds to a successfully terminated computation.

\subsection{Operational Semantics}
\label{subsubsec:formalization:semantics}

Reduction only applies to closed commands, and the rules are given in \cref{fig:formalization:semantics}.
Suppose that $c$ is well-typed in the context $\Gamma,\Delta$, and that $\sigma$ is a substitution from the empty context for $\Delta$, i.e. $\diamond \vdash \sigma : \Delta$.
Then the result of substituting $\sigma$ in $c$ for the variables from $\Delta$, which we write $c\ \sigma$, is well-defined and typechecks in the context $\Gamma$.
We don't give the full rules for substitution, since these are obvious but involve the usual technical complications of variable capture.
Since all the xtors of the corresponding type must occur exactly once inside a match, their order does not matter and we will thus adopt the notational convention that the matching case in a pattern match is written as its first case.
When a value (constructor or destructor) meets a continuation (pattern match or copattern match), evaluation proceeds by straightforward substitution.
The case is slightly more difficult if a constructor meets the call of a globally defined function.
In that case we have the two contexts, $\Gamma$ and $\Delta$, where the function $\mathcal{X}$ is declared in the program.
In that case $c$ is typed in the context $\Gamma, \Delta$, and $\sigma$ and $\tau$ are substitutions for $\Gamma$ and $\Delta$, respectively.
The \Done\ command is in normal form and cannot be evaluated further.

\subsection{Type Soundness}
\label{subsubsec:formalization:soundness}

The soundness of the system has been mechanically verified in the accompanying Coq formalization.
The following two theorems hold.

\begin{theorem}[Preservation]
  If $\diamond \vdash c_1 \cmd$ and $c_1 \reducesto c_2$, then $\diamond \vdash c_2 \cmd$
\label{thm:preservation}
\end{theorem}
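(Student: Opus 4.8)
The plan is to proceed by case analysis on the reduction step $c_1 \reducesto c_2$, with one case for each of the four rules of \cref{fig:formalization:semantics}. Every case is discharged by inverting the typing derivation of the redex and then invoking the \emph{substitution lemma} already announced informally in \cref{subsubsec:formalization:semantics}: if $\Gamma, \Delta \vdash c \cmd$ and $\diamond \vdash \sigma : \Delta$, then $\Gamma \vdash c\,\sigma \cmd$, together with its companions for expressions, if $\Gamma, \Delta \vdash e \prdcon T$ and $\diamond \vdash \sigma : \Delta$ then $\Gamma \vdash e\,\sigma \prdcon T$, and for substitutions. Because there is a single syntactic category of expressions bearing two typing judgements, I would phrase this statement uniformly in the orientation $o$, so that one induction simultaneously covers producers and consumers. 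Restricting to closed substitutions $\diamond \vdash \sigma : \Delta$, as the prose already does, is what makes the statement usable while sidestepping variable capture, since closed terms carry no free variables that a binder could capture.

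First I would prove the substitution lemma by simultaneous induction on the typing derivations for expressions, commands, and substitutions, which are mutually recursive through \textsc{T-Match}, \textsc{T-Cut} and \textsc{T-Subst}$_2$. The \textsc{T-Var} and \textsc{T-Done} cases are immediate, while \textsc{T-Xtor}, \textsc{T-Fun} and \textsc{T-Cut} follow by applying the induction hypothesis to the substitution or subexpressions in their premises. The only delicate case is \textsc{T-Match}: each branch is typed in a context extended by the branch's own $\Delta_i$, so the substitution must be threaded past these binders; here the closedness of $\sigma$ guarantees that no variable of $\Delta_i$ is captured. In the fragment of this section the premise $\substitutable(e)$ of \textsc{T-Subst}$_2$ holds of every expression, so it contributes nothing; it will only become a real constraint once the $\mu$ and $\tilde\mu$ constructs of \cref{sec:evaluationorder} are added.

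With the lemma available the four reduction cases are essentially inversions. For \textsc{Match}, inverting \textsc{T-Cut} on $\diamond \vdash \mathcal{X}\sigma \gg \match{\data}\,T\,\{\ldots\} \cmd$ yields $\diamond \vdash \mathcal{X}\sigma \prd T$ and $\diamond \vdash \match{\data}\,T\,\{\ldots\} \con T$; inverting \textsc{T-Xtor} gives $\diamond \vdash \sigma : \Delta$ for the selected clause, and inverting \textsc{T-Match} gives $\diamond, \Delta \vdash c \cmd$ for its body, so the lemma delivers $\diamond \vdash c\,\sigma \cmd$, which is exactly $\diamond \vdash c_2 \cmd$. The \textsc{Comatch} case is the exact mirror image, with the roles of producer and consumer swapped through $\valfun{\codata} = \mathbf{con}$ and $\cntfun{\codata} = \mathbf{prd}$. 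For \textsc{ConCall}, inverting \textsc{T-Xtor} and \textsc{T-Fun} yields $\diamond \vdash \tau : \Delta$ and $\diamond \vdash \sigma : \Pi$, and \textsc{Wf-Fun} followed by \textsc{T-Match} gives $\Pi, \Delta \vdash c \cmd$ for the matching branch; applying the lemma first with residual context $\Pi$ to obtain $\Pi \vdash c\,\tau \cmd$, then with empty residual context, yields $\diamond \vdash (c\,\tau)\,\sigma \cmd$, which is $\diamond \vdash c_2 \cmd$. Crucially, the general residual context $\Gamma$ in the lemma means no separate weakening step is required. \textsc{PrdCall} is again the symmetric dual.

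The main obstacle is the substitution lemma itself, not the case analysis. Its statement has to be orientation-generic and mutually recursive across the three judgements, and the \textsc{T-Match} case is where the bookkeeping of binders lives; once it is in place, each reduction case reduces to reading off the premises via inversion. It is worth noting that the symmetry of the calculus pays off precisely here: \textsc{Match}/\textsc{Comatch} and \textsc{ConCall}/\textsc{PrdCall} are literal duals, so the substantive work happens once and is reused, collapsing what would be eight data-versus-codata cases into four mirror-image pairs.
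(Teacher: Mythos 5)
Your proposal is correct and takes essentially the same route as the paper: the paper itself offers no written-out proof (it defers to its Coq formalization), but the closed-substitution lemma you build on is precisely the property stated informally in \cref{subsubsec:formalization:semantics}, and your case analysis over the four reduction rules of \cref{fig:formalization:semantics} with inversion of \textsc{T-Cut}, \textsc{T-Xtor}/\textsc{T-Fun}, \textsc{T-Match} and \textsc{Wf-Fun} is the standard argument that formalization carries out. Your handling of \textsc{ConCall}/\textsc{PrdCall} by two successive applications of the lemma (first $\tau$ against $\Delta$ with residual context $\Pi$, then $\sigma$ against $\Pi$) matches the paper's reduction $(c\,\tau)\,\sigma$ exactly, so there is nothing to correct.
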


\begin{theorem}[Progress]
  If $\diamond \vdash c_1 \cmd$ then either $c_1 = \Done$ or there exists a command $c_2$ such that $c_1 \reducesto c_2$.
\label{thm:progress}
\end{theorem}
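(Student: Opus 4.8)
The plan is to proceed by case analysis on the structure of the command $c_1$, supported by a canonical forms lemma for closed producers and consumers. Since $c_1$ is typed in the empty context, its derivation ends in either \textsc{T-Done} or \textsc{T-Cut}. In the first case $c_1 = \Done$ and the left disjunct holds immediately. In the second case $c_1 = e_1 \gg e_2$ with $\diamond \vdash e_1 \prd T$ and $\diamond \vdash e_2 \con T$ for a common type $T$, and it remains to show that every such well-typed cut is a redex.

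The key step is a canonical forms lemma obtained by inverting the four expression typing rules while tracking how the helper functions $\valfun{\cdot}$ and $\cntfun{\cdot}$ tie orientation to polarity. Because the context is empty, \textsc{T-Var} cannot apply, so each of $e_1$ and $e_2$ is either an xtor application $\mathcal{X}\sigma$ or a match $\match{p}\,T\,\{\ldots\}$. Inverting \textsc{T-Xtor}, \textsc{T-Fun}, and \textsc{T-Match}, and using that $\valfun{p} = \mathbf{prd}$ iff $p = \data$ while $\cntfun{p} = \mathbf{prd}$ iff $p = \codata$, then forces the following shapes: a closed producer of a data type is a constructor, and of a codata type is either a function call or a copattern match; dually, a closed consumer of a codata type is a destructor, and of a data type is either a function call or a pattern match.

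With this lemma I split on the polarity of $T$, which is fixed by its declaration. If $T$ is a data type, then $e_1$ is a constructor $\mathcal{Y}\tau$, while $e_2$ is either a pattern match, so that \textsc{Match} fires, or a consumer-oriented function call, so that \textsc{ConCall} fires. If $T$ is a codata type, then $e_2$ is a destructor $\mathcal{Y}\tau$, while $e_1$ is either a copattern match, so that \textsc{Comatch} fires, or a producer-oriented function call, so that \textsc{PrdCall} fires. In each of the four subcases the cut $e_1 \gg e_2$ matches the left-hand side of exactly one reduction rule of \cref{fig:formalization:semantics}, yielding the required reduct $c_2$.

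The main obstacle is not the case split itself but discharging the side conditions that make the matching reduction rule actually fire. The \textsc{Match} and \textsc{Comatch} rules require that the constructor (resp.\ destructor) of the cut occurs among the cases of the match, which is supplied by the exhaustiveness side condition on \textsc{T-Match}. The \textsc{ConCall} and \textsc{PrdCall} rules carry the premise that the invoked function is declared in the program with a case for the relevant xtor; this holds because the function call was typed by \textsc{T-Fun}, which looks the declaration up in $\text{Funs}(T)$, while well-formedness via \textsc{Wf-Fun} guarantees that the declared (co)pattern match covers every xtor of $T$. Once these coverage facts are established the remainder is routine, consistent with the theorem having been discharged in the accompanying Coq development.
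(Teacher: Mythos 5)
Your proposal is correct. Note that the paper gives no written proof of \cref{thm:progress} at all---it simply defers to the accompanying Coq formalization---and your argument (case analysis on the typing derivation, a canonical-forms lemma driven by the $\valfun{p}$/$\cntfun{p}$ orientation constraints, and coverage discharged via the exhaustiveness side condition of \textsc{T-Match} together with \textsc{Wf-Fun}) is exactly the standard structure such a mechanized proof must follow, so there is nothing to compare beyond confirming that your four subcases (\textsc{Match}, \textsc{ConCall}, \textsc{Comatch}, \textsc{PrdCall}) are exhaustive and each rule's premises are met.
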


\section{Symmetric \Derefunc}
\label{sec:xtorization}
In this section we explain how to change the polarity of a type $T$ via core defunctionalization \defunctionalize\ and core refunctionalization \refunctionalize, which are subsumed by one algorithm \xfunctionalize.
That is, we show how to implement the horizontal transformations of the diagram introduced in \cref{sec:mainideas}.

\begin{center}
  \begin{tikzcd}
    \cbv\ \data \ar[rr, shift left=0.5ex, "\refunctionalize"] \ar[d, shift left=0.75ex, gray, draw=gray, "\cbvtocbn"] & &
    \cbv\ \codata \ar[ll, shift left=0.5ex, "\defunctionalize"] \ar[d, shift left=0.75ex, gray, draw=gray, "\cbvtocbn"] \\
    \cbn\ \data \ar[rr, shift left=0.5ex, "\refunctionalize"] \ar[u, shift left=0.75ex, gray, draw=gray, "\cbntocbv"] & &
    \cbn\ \codata \ar[ll, shift left=0.5ex, "\defunctionalize"] \ar[u, shift left=0.75ex, gray, draw=gray, "\cbntocbv"]
  \end{tikzcd}
\end{center}

We make one simplifying assumption; we assume that the program does not contain any local pattern matches on $T$.
This does not reduce the expressiveness of the system, but simplifies the presentation of the algorithm.%
\footnote{
  In order to lift this restriction, local pattern matches have to be first lambda-lifted and replaced by functions declarions.
  Only in the next step can the the algorithm presented in this section be used.
  Functions written by the programmer have to be distinguished from functions which are the result of lambda lifting by use of annotations.
  After \derefunc\ of the program, all functions which resulted from lambda-lifting have to be inlined.
  All details about annotations, lifting and inlining have been developed and formally verified by \citet{Uroboro2019}.
  }

In \cref{fig:formalization:terms} we used the same set \textsc{Name} for the names of both constructors/destructors and function calls.
This choice was motivated by the fact that core \derefunc\ should be the identity function on terms.
Typing derivations, on the other hand, are affected by \derefunc.
For example, if a constructor \X\ with arguments $\Delta$ is declared for the data type $T$, then a corresponding function declaration $\X$ will be declared for the codata type $T$ in the refunctionalized program.
That is, the the following type derivation for a term $\X \sigma$ on the top will be replaced by the derivation below, where $\mathcal{D}$ is some derivation for $\Gamma \vdash \sigma : \Delta$ and $\mathcal{D}'$ is the corresponding derivation for the same judgement within the refunctionalized program.

  \begin{prooftree}
    \AxiomC{$\mathcal{X}\Delta \in \textsc{Xtors}(T)$}
    \AxiomC{$\text{Polarity}(T)=\data$}
    \AxiomC{$\mathcal{D}$}
    \noLine
    \UnaryInfC{$\Gamma \vdash \sigma : \Delta$}
    \RightLabel{\textsc{T-Xtor}}
    \TrinaryInfC{$\Gamma \vdash \mathcal{X}\sigma \prd T$}
  \end{prooftree}
  \begin{prooftree}
    \AxiomC{$\mathcal{X}\Delta \in \textsc{Funs}(T)$}
    \AxiomC{$\text{Polarity}(T)=\codata$}
    \AxiomC{$\mathcal{D}'$}
    \noLine
    \UnaryInfC{$\Gamma \vdash \sigma : \Delta$}
    \RightLabel{\textsc{T-Fun}}
    \TrinaryInfC{$\Gamma \vdash \mathcal{X}\sigma \prd T$}
  \end{prooftree}
\vskip\baselineskip

\Derefunc\ $\xfunctionalize^T$ does not change any type declaration apart from the type declaration for $T$ itself, which we now define.
We do this by matrix transposition, as described by \citet{ostermann2018dualizing}.
A type declaration
\begin{equation*}
  s\ p\ \mathbf{type}\ T\ \lbrace\ \overline{\X\Delta} \rbrace\ \mathbf{with}\ \overline{\emph{f}}
\end{equation*}
is read into the first matrix of \cref{fig:xtorization:matrix} and then transposed to obtain the second matrix.
The second matrix is then used to generate the new type declaration.

\newcommand{\narrowcdots}{\hspace{-0.2cm}\small$\cdots$}
\begin{figure}[htbp]
  \scriptsize
  \begin{tikzpicture}
    \node (a) at (0,0) {
      \begin{tabular}{|c|ccc|}
        \hline
        \diagbox{\textsc{Dtor}\hspace{-1cm}}{\textsc{Fun}} & $\mathcal{X}_1\Gamma_1$ & \narrowcdots & \hspace{-0.2cm}$\mathcal{X}_m\Gamma_m$ \\
        \hline
        $\mathcal{Y}_1\Delta_1$ & $c_{1,1}$ & \narrowcdots & \hspace{-0.2cm}$c_{1,m}$ \\
        $\vdots$ & $\vdots$ &  & $\vdots$ \\
        $\mathcal{Y}_n\Delta_n$ & $c_{n,1}$ & \narrowcdots & \hspace{-0.2cm}$c_{n,m}$ \\
        \hline
      \end{tabular}
    };
    \node (b) at (6,0) {
      \begin{tabular}{|c|ccc|}
        \hline
        \diagbox{\textsc{Ctor}\hspace{-1cm}}{\textsc{Fun}} & $\mathcal{Y}_1\Delta_1$ & \narrowcdots & \hspace{-0.2cm}$\mathcal{Y}_n\Delta_n$ \\
        \hline
        $\mathcal{X}_1\Gamma_1$ & $c_{1,1}$ & \narrowcdots & \hspace{-0.2cm} $c_{n,1}$ \\
        $\vdots$ & $\vdots$ & & $\vdots$ \\
        $\mathcal{X}_m\Gamma_m$ & $c_{1,m}$ & \narrowcdots & \hspace{-0.2cm} $c_{n,m}$ \\
        \hline
      \end{tabular}
    };
    \draw[->,bend left,transform canvas={yshift=+.2cm}] (a.east) to node[above] {defunctionalize} (b.west) ;
    \draw[->,bend left,transform canvas={yshift=-.2cm}] (b.west) to node[below] {refunctionalize} (a.east) ;
  \end{tikzpicture}
  \caption{\Derefunc\ via matrix transposition.}
  \Description{\Derefunc\ via matrix transposition.}
  \label{fig:xtorization:matrix}
\end{figure}

For the system presented in \cref{sec:formalization}, the following properties are easily established.

\subsection{Properties of \Derefunc}
\label{subsec:xtorization:properties}

Since matrix transposition is its own inverse, it is obvious that defunctionalization and refunctionalization are mutually inverse:

\begin{theorem}[Mutual Inverse]
  \label{theorem:progsinverse}
  For every well-formed program $P$ and data type $T$ in $P$:
  \begin{align*}
    \xfunctionalizeinv^{T}(\xfunctionalize^{T}(P)) = P
  \end{align*}
\end{theorem}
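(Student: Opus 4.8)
The plan is to lean on the two structural facts already isolated before the statement: core \derefunc\ is the identity on terms, and $\xfunctionalize^{T}$ rewrites only the declaration of $T$, leaving every other type declaration and the top-level command untouched. First I would record that, since both $\xfunctionalize^{T}$ and $\xfunctionalizeinv^{T}$ act as the identity on the top-level command, on every function body, and on every declaration other than that of $T$, the desired equation $\xfunctionalizeinv^{T}(\xfunctionalize^{T}(P)) = P$ reduces to the single statement that the declaration of $T$ is recovered verbatim. This reduction uses the simplifying assumption that $P$ contains no local pattern matches on $T$: all clauses that match on $T$ are then already collected into the $\mathbf{with}\ \overline{f}$ component of $T$'s own declaration, so the entire content affected by the transformation lives inside that one declaration and nothing relevant escapes it.

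Next I would make precise the encode/decode maps underlying \cref{fig:xtorization:matrix}. Let $\mathrm{mat}_{r}$ send a polarity-$r$ declaration of $T$ to the matrix whose two margins carry the xtor signatures and the function signatures and whose cell $(i,j)$ holds the clause body $c_{i,j}$, and let $\mathrm{decl}_{r}$ be the inverse reading that reconstitutes a polarity-$r$ declaration from such a matrix. Writing $p$ for the target polarity of the first step (so $T$ has polarity $\flip{p}$ in $P$; for a data type $\flip{p}=\data$ and $\xfunctionalize^{T}$ is refunctionalization), the transformation on $T$'s declaration is exactly $d \mapsto \mathrm{decl}_{p}\big((\mathrm{mat}_{\flip{p}}(d))^{\mathsf{T}}\big)$: transpose the matrix and flip the recorded polarity, so that the former xtors are re-read as functions and the former functions as xtors. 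The key bookkeeping lemma is that $\mathrm{mat}_{r}$ and $\mathrm{decl}_{r}$ are mutually inverse: decoding preserves the exact names $\X,\Y$, the argument contexts, their order on each margin, and each body $c_{i,j}$, so that $\mathrm{decl}_{r}\circ\mathrm{mat}_{r}=\mathrm{id}$ and $\mathrm{mat}_{r}\circ\mathrm{decl}_{r}=\mathrm{id}$ for either polarity $r$. Here well-formedness is what makes the maps well-behaved: exhaustive matching guarantees every cell is filled and the matrix is rectangular.

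With these pieces the computation collapses. Applying $\xfunctionalizeinv^{T}$ to $\xfunctionalize^{T}(d)$ gives
\[
  \xfunctionalizeinv^{T}(\xfunctionalize^{T}(d))
  = \mathrm{decl}_{\flip{p}}\Big(\big(\mathrm{mat}_{p}(\mathrm{decl}_{p}((\mathrm{mat}_{\flip{p}}(d))^{\mathsf{T}}))\big)^{\mathsf{T}}\Big).
\]
The inner composite $\mathrm{mat}_{p}\circ\mathrm{decl}_{p}$ cancels by the bijection lemma, leaving $\mathrm{decl}_{\flip{p}}\big(((\mathrm{mat}_{\flip{p}}(d))^{\mathsf{T}})^{\mathsf{T}}\big)$; transposition is an involution, $(M^{\mathsf{T}})^{\mathsf{T}}=M$, and $\mathrm{decl}_{\flip{p}}\circ\mathrm{mat}_{\flip{p}}=\mathrm{id}$, so the whole expression is $d$. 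I also use that polarity flip is involutive, $\flip{\flip{p}}=p$, immediate from the definition of $\flip{(-)}$, so that the outer decode is indexed by the polarity $\flip{p}$ that $T$ carried in $P$, as required.

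The genuinely non-trivial content — and the only place where real work hides — is the bijection lemma for $\mathrm{mat}_{r}$ and $\mathrm{decl}_{r}$: one must verify that the indexing and ordering of the two margins is tracked consistently, so that a double transposition returns each body to its original cell and each signature to its original margin, rather than merely recovering the declaration up to a permutation of xtors or functions. Everything else — the identity-on-terms property, the untouched-other-declarations property, and the two involutions — is routine. By the data/codata symmetry built into the system the argument is verbatim when $T$ is a codata type, so stating it for a data type loses no generality; this is exactly what the accompanying Coq development establishes for \cref{theorem:progsinverse}.
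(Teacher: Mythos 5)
Your proposal is correct and takes essentially the same route as the paper: the paper justifies \cref{theorem:progsinverse} solely by the observation that $\xfunctionalize^{T}$ is the identity on terms and on all declarations other than $T$'s, and acts on $T$'s declaration by matrix transposition, which is an involution (with the full details delegated to the Coq formalization). Your explicit $\mathrm{mat}/\mathrm{decl}$ encode/decode bijection and the polarity-flip involution simply spell out the bookkeeping the paper treats as obvious.
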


\Derefunc\ also preserve well-formedness of programs.

\begin{theorem}[Typeability preservation]
  \label{theorem:typeabilitypreservation}
  If $P$ is a well-formed program, then $\xfunctionalize^{T}(P)$ is also well-typed.
\end{theorem}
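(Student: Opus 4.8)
The plan is to exploit the two facts established in \cref{sec:xtorization}: core \derefunc\ acts as the identity on all terms, and it rewrites \emph{only} the declaration of the type $T$---swapping its polarity from $p$ to $\flip{p}$, turning its xtors into functions and its functions into xtors by transposing the matrix of \cref{fig:xtorization:matrix}, while leaving every command body and every attached context syntactically unchanged. Writing $P' = \xfunctionalize^{T}(P)$, the task therefore reduces to showing that every typing and well-formedness derivation valid in $P$ can be replayed in $P'$. I would prove the core statement---that $\Gamma \vdash e \prd S$, $\Gamma \vdash e \con S$, $\Gamma \vdash c \cmd$ and $\Gamma \vdash \sigma : \Delta$ are all preserved when passing from $P$ to $P'$---by a single mutual induction on typing derivations, and then lift it to the well-formedness judgements. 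The only rules whose validity can possibly be disturbed are those that consult the declaration of $T$, namely \textsc{T-Xtor}, \textsc{T-Fun}, \textsc{T-Match}, and the well-formedness of $T$'s functions and of its type declaration; every other rule, and every \textsc{T-Xtor}/\textsc{T-Fun} instance for a type other than $T$, refers to data untouched by the transformation and is handled immediately by the induction hypothesis.

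The heart of the argument is the interchange of \textsc{T-Xtor} and \textsc{T-Fun}. If $\mathcal{X}$ is an xtor of $T$ in $P$, it is typed by \textsc{T-Xtor} with orientation $\valfun{p}$; in $P'$ the same $\mathcal{X}$ is a \emph{function} of the type $T$ of polarity $\flip{p}$, so it is retyped by \textsc{T-Fun} with orientation $\cntfun{\flip{p}}$. These two orientations agree, because $\valfun{p} = \cntfun{\flip{p}}$ (indeed $\valfun{\data} = \mathbf{prd} = \cntfun{\codata}$ and $\valfun{\codata} = \mathbf{con} = \cntfun{\data}$); dually, a function of $T$ typed by \textsc{T-Fun} with orientation $\cntfun{p}$ becomes an xtor retyped by \textsc{T-Xtor} with $\valfun{\flip{p}} = \cntfun{p}$. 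In both cases the remaining premise is a substitution judgement $\Gamma \vdash \tau : \Delta$ whose context $\Delta$ is carried over verbatim by the transposition, so it is supplied by the induction hypothesis (the $\substitutable$ side-condition in \textsc{T-Subst}$_2$ is vacuous in the present setting). The assumption made in \cref{sec:xtorization}, that $P$ contains no local matches on $T$, is exactly what removes the one genuinely problematic \textsc{T-Match} case, since then every match on $T$ lives inside a function declaration and is dealt with by the well-formedness step rather than by the term induction.

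It then remains to check the well-formedness rules, the substantive case being \textsc{Wf-Fun} for the functions of $T$ that the transposition newly creates. The function obtained from an old xtor $\mathcal{X}_a\Gamma_a$ (co)pattern matches on all the new xtors (the old functions) $\mathcal{Y}_b\Pi_b$, and its body in the $\mathcal{Y}_b$-case is the very same command $c_{a,b}$ that was the $\mathcal{X}_a$-case of the old function $\mathcal{Y}_b$. In $P$ this command was typed, via \textsc{Wf-Fun} and \textsc{T-Match}, in the context $\Pi_b, \Gamma_a$; in $P'$ the same two rules demand it in $\Gamma_a, \Pi_b$. Since all bound names are required to be distinct, these two contexts are the \emph{same} finite map, so an exchange property of the judgement together with the core induction lemma discharges the case, and exhaustiveness holds because a full row of the transposed matrix furnishes exactly one case per new xtor. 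The remaining obligations---\textsc{Wf-Type} for $T$ (its new xtor contexts are the old, already well-formed $\Pi_b$, and the set of type names is unchanged), \textsc{Wf-Fun}/\textsc{Wf-Type} for the untouched types, and \textsc{Wf-Prog} for the top-level command---follow structurally from the core lemma.

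I expect the main obstacle to be administrative rather than conceptual: it is the context bookkeeping in the \textsc{Wf-Fun} case, i.e.\ certifying that $c_{a,b}$ survives the reshuffling of its ambient context from $\Pi_b, \Gamma_a$ to $\Gamma_a, \Pi_b$. This needs the (routine) admissibility of exchange and weakening for the typing judgements together with the disjointness of the bound variables, and it is also the place where the precise matrix-transposition indexing has to be matched up correctly. By contrast the orientation bookkeeping, although it is the real reason the transformation is type-preserving, collapses to the two one-line identities $\valfun{p} = \cntfun{\flip{p}}$ and $\cntfun{p} = \valfun{\flip{p}}$.
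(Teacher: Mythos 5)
Your proposal is correct and takes essentially the same approach as the paper: the paper defers the detailed argument to its Coq formalization, but the one step it does display---replacing each \textsc{T-Xtor} derivation for $T$ by a \textsc{T-Fun} derivation in the transformed program (and vice versa), which type-checks precisely because $\valfun{p} = \cntfun{\flip{p}}$---is exactly the heart of your induction. The remaining bookkeeping you identify (the transposed \textsc{Wf-Fun} cases with their reordered contexts, exhaustiveness from full rows of the transposed matrix, and the role of the no-local-matches-on-$T$ assumption) is the same administrative work the mechanized proof discharges.
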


\Derefunc\ is semantics-preserving.

\begin{theorem}[Semantic preservation]
  \label{theorem:semantic}
  Let $P$ be a program, with $\vdash P\ \OK$, $T$ a type of polarity $p$ in $P$ and $c_1$ a closed command in $P$, i.e. $\vdash c_1 \cmd$.
  Furthermore, $c_1$ may not contain any local pattern matches on $T$.
  Then, if $c_1 \reducesto c_2$ in $P$, $c_1 \reducesto c_2$ in $\xfunctionalize^T(P)$.
\end{theorem}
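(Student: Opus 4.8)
The plan is to lean on the fact, stressed throughout \cref{sec:xtorization}, that core \derefunc\ $\xfunctionalize^T$ is the \emph{identity on terms}: it rewrites only the declaration of $T$ (by the matrix transposition of \cref{fig:xtorization:matrix}) and leaves every command, and every \emph{other} type declaration, syntactically untouched. Hence $c_1$ and $c_2$ are literally the same objects in $P$ and in $\xfunctionalize^T(P)$, and the only thing that can differ is \emph{which} rule of \cref{fig:formalization:semantics} justifies the step. Because reduction fires only at the top of a closed command (there are no congruence rules and no $\mu/\tilde\mu$ in this fragment), there is no evaluation context to recurse through, so the whole argument is a finite case analysis on the rule deriving $c_1 \reducesto c_2$ in $P$, plus one observation about commuting substitutions.

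First I would split on the type carried by the redex, which is well-defined because the two cut expressions are a producer and a consumer of the same type by \textsc{T-Cut}. If the redex concerns a type $T' \neq T$, then, since $\xfunctionalize^T$ alters no declaration other than that of $T$, all xtors, functions and defining (co)pattern matches relevant to the step are identical in both programs; the very same rule instance fires and yields the same $c_2$. It remains to treat redexes concerning $T$ itself. By the symmetry of the helper functions $\valfun{\cdot}$ and $\cntfun{\cdot}$ the two polarities are mirror images of one another, so I would carry out $p = \data$ in detail and note that $p = \codata$ is obtained verbatim with \textsc{Comatch}/\textsc{PrdCall} in place of \textsc{Match}/\textsc{ConCall}.

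For $p = \data$ exactly two rules can mention $T$ (the others require, respectively, a local $\match{\codata}$ on a codata type and a function call on a codata type). The rule \textsc{Match} forces $c_1$ to have the shape $\mathcal{X}\sigma \gg \match{\data}\, T\, \{\ldots\}$, i.e.\ to contain a \emph{local} pattern match on $T$; this is ruled out by the hypothesis on $c_1$, so the case is vacuous. This is precisely the point where the ``no local pattern matches on $T$'' assumption is consumed. The remaining possibility is \textsc{ConCall}: a constructor $\mathcal{X}_i$ of $T$ meets a call of a function $\mathcal{Y}_j$ that pattern matches on $T$, so that $\mathcal{X}_i\sigma \gg \mathcal{Y}_j\tau$ reduces to $(c_{j,i}\,\sigma)\,\tau$, where $\sigma$ substitutes for the constructor's context $\Gamma_i$ and $\tau$ for the function's context $\Delta_j$. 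After refunctionalization $\mathcal{X}_i$ has become a function defined by a copattern match and $\mathcal{Y}_j$ a destructor, so the identical command $\mathcal{X}_i\sigma \gg \mathcal{Y}_j\tau$ is now a \textsc{PrdCall} redex; since transposition relocates the cell $c_{j,i}$ without changing it, the clause selected is the same command $c_{j,i}$, and the rule produces $(c_{j,i}\,\tau)\,\sigma$.

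The crux, and the step I expect to require the most care, is to show that the two reducts $(c_{j,i}\,\sigma)\,\tau$ and $(c_{j,i}\,\tau)\,\sigma$ are the \emph{same} command. Here closedness is essential: as reduction applies only to closed commands, $\sigma$ and $\tau$ substitute closed expressions and the variables of $\Gamma_i$ and $\Delta_j$ are disjoint, so the two substitutions commute and both reducts equal the simultaneous substitution $c_{j,i}$ applied to $\sigma$ and $\tau$ together. The side condition of \textsc{PrdCall}---that the clause $\mathcal{X}_i \coloneq \match{\codata}\, T\, \{\ldots\, \mathcal{Y}_j\Delta_j \Rightarrow c_{j,i}\, \ldots\}$ lies in the transformed program---holds by construction of the transpose. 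This closes the $\data$ case and, by the mirrored argument, the $\codata$ case, establishing the theorem. In the mechanized development the real content is exactly this commuting-substitution lemma together with the bookkeeping that identifies the transposed function clause with the original cell; everything else is routine case splitting.
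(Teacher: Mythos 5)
Your proof is correct and follows the same core idea as the paper's: $\xfunctionalize^T$ is the identity on terms, and matrix transposition preserves the cell addressed by a (\textsc{Fun}, \textsc{Xtor}) pair, so the same body is selected after the transformation. The paper's entire proof is one sentence saying exactly that; your version is a fleshed-out case analysis of the same argument. The one place where you add genuine content is the substitution-order point: in $P$ the step is an instance of \textsc{ConCall} producing $(c\,\sigma)\,\tau$ (xtor substitution first, then function substitution), while in $\xfunctionalize^T(P)$ the same command fires as a \textsc{PrdCall} producing $(c\,\tau)\,\sigma$, so the two reducts agree only because the substitutions act on disjoint contexts with closed expressions and hence commute. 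The paper's proof silently glosses over this swap, so your commuting-substitution step is a worthwhile refinement rather than a deviation; the rest (the vacuousness of the \textsc{Match} case under the ``no local pattern matches on $T$'' hypothesis, and the triviality of redexes at types $T' \neq T$) matches what the paper implicitly assumes.
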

\begin{proof}
  Since this transformation only transposes the matrix, the cell which is addressed in this matrix by a pair of \textsc{Fun} and \textsc{Xtor} does not change, thus the body which will be substituted for a command consisting of such a pair will be the same as before.
\end{proof}

\section{Evaluation Order}
\label{sec:evaluationorder}
In this section we will formally introduce a new syntactic construct to the language: $\mu$-abstractions.
The introduction of $\mu$ abstractions will create \emph{critical pairs}.
A critical pair is a redex that can be evaluated to two different commands which don't reduce to the same normal form.
In order to ``defuse'' these critical pairs we have to introduce an evaluation order, which prescribes precisely how to evaluate these redexes.

After presenting several different alternative evaluation strategies we will end with a \emph{nominal} strategy.
Every type declares how it's redexes are to be evaluated.

\subsection{\texorpdfstring{Extending the calculus with $\mu$ abstractions}{Extending the calculus with mu abstractions}}
\label{subsec:evaluationorder:extension}

We will now extend the syntax and typing rules with $\mu$ abstractions in \cref{fig:muabstraction}.
In the system of \cref{sec:formalization} the syntax directly determined the only possible operational semantics.
But with the addition of $\mu$ the calculus is no longer confluent, since there is a critical pair which is well-known in the \lambdamutildemu\ literature:
A command where $\mu$ and $\tilde{\mu}$ meet.
In our setting, this corresponds to a producer $\muprdabs{x}{T}{c_1}$ and a consumer $\muconabs{x}{T}{c_2}$, as presented in \cref{subsec:mainideas:evaluationorder}.
\begin{equation*}
  c_1 \quad \reducestoinv \quad  \muprdabs{x}{T}{c_1} \gg \muconabs{y}{T}{c_2} \quad \reducesto \quad c_2
\end{equation*}
\begin{figure}[tbp]
  \[
    \begin{array}{lclr}
      e & \Coloneqq & \ldots \sep \muqabs{x}{T}{c} & \emph{Expressions} \\
    \end{array}
    \]
  \begin{flushright}
    \fbox{Expression typing: $\Gamma \vdash e \prdcon T$}
  \end{flushright}
  \vspace{-.4em}
  \begin{prooftree}
    \AxiomC{$\Gamma, x \prdcon T \vdash c \cmd$}
    \RightLabel{\textsc{T-}$\mu$}
    \UnaryInfC{$\Gamma \vdash \muqabs{x}{T}{c} \overset{\flip{o}}{:} T$}
  \end{prooftree}
  \vspace{.2em}
  \begin{flushright}
    \fbox{Reduction: $c_1 \reducesto c_2$}
  \end{flushright}

  \begin{minipage}{0.45\textwidth}
    \begin{prooftree}
      \AxiomC{$\substitutable(e)$}
      \RightLabel{\textsc{R-}$\mu_1$}
      \UnaryInfC{$\muprdabs{x}{T}{c} \gg e\ \,\reducesto\ c [e/x]$}
    \end{prooftree}
  \end{minipage}
  \begin{minipage}{0.45\textwidth}
    \begin{prooftree}
      \AxiomC{$\substitutable(e)$}
      \RightLabel{\textsc{R-}$\mu_2$}
      \UnaryInfC{$e \gg \muconabs{x}{T}{c}\ \,\reducesto\ c [e/x]$}
    \end{prooftree}
  \end{minipage}
  \caption{Syntax, typing and evaluation rules for $\mu$ abstractions. $c [e/x]$ denotes command $c$ where all occurrences of variable $x$ have been replaced by expression $e$.}
  \Description{Syntax, typing and evaluation rules for $\mu$ abstractions.}
\label{fig:muabstraction}
\end{figure}
In order to guarantee confluence we therefore need the $\substitutable(e)$ predicate which we introduced in \cref{sec:formalization}.
If we know that $\substitutable\bigl(\muprdabs{x}{T}{c_1}\bigr)$ and $\substitutable\bigl(\muconabs{y}{T}{c_2}\bigr)$ never hold simultaneously for the same type T, then the two evaluation rules from \cref{fig:muabstraction} don't overlap.

A sensible definition of $\substitutable(e)$ will therefore always determine which $\mu$ abstraction to evaluate first.
This corresponds precisely to specifying an \emph{evaluation order}.
In the following subsection we will therefore discuss some possible evaluation strategies.

\subsection{Specifying evaluation strategies}
\label{subsec:evaluationorder:strategies}

\begin{figure*}[htbp]
  \begin{subfigure}[b]{0.9\textwidth}
    \begin{equation*}
      \substitutable(\mathcal{X}\sigma) \quad
      \substitutable\bigl(\match{p}\, T\, \lbrace\, \ldots \rbrace\bigr) \quad
      \substitutable(x)
    \end{equation*}
    \caption{Common rules}
    \Description{Common rules}
    \label{fig:evalstrategies:common}
  \end{subfigure}
  \vspace{0.2cm}

  \begin{minipage}[b]{0.3\textwidth}
    \begin{subfigure}[b]{0.9\textwidth}
      $\substitutable\bigl(\muconabs{x}{T}{c}\bigr)$
      \caption{Global \cbv}
      \Description{Global call-by-value}
      \label{fig:evalstrategies:cbv}
    \end{subfigure}
    \vskip\baselineskip

    \begin{subfigure}[b]{0.9\textwidth}
      $\substitutable\bigl(\muprdabs{x}{T}{c}\bigr)$
      \caption{Global \cbn}
      \Description{Global call-by-name}
      \label{fig:evalstrategies:cbn}
    \end{subfigure}
  \end{minipage}
  \begin{subfigure}[b]{0.3\textwidth}
    If $T$ has polarity \textbf{data}:
    \vspace{-0.2cm}
    \begin{equation*}
      \substitutable\bigl(\muconabs{x}{T}{c}\bigr)
    \end{equation*}
    If $T$ has polarity \textbf{codata}:
    \vspace{-0.2cm}
    \begin{equation*}
      \substitutable\bigl(\muprdabs{x}{T}{c}\bigr)
    \end{equation*}
    \caption{Polar}
    \Description{Polar evaluation order}
    \label{fig:evalstrategies:polar}
  \end{subfigure}
  \begin{subfigure}[b]{0.3\textwidth}
    If $T$ is a \cbv\ type:
   \vspace{-0.2cm}
    \begin{equation*}
      \substitutable\bigl(\muconabs{x}{T}{c}\bigr)
    \end{equation*}
    If $T$ is a \cbn\ type:
    \vspace{-0.2cm}
    \begin{equation*}
      \substitutable\bigl(\muprdabs{x}{T}{c}\bigr)
    \end{equation*}
    \caption{Nominal}
    \Description{Nominal evaluation order}
    \label{fig:evalstrategies:nominal}
  \end{subfigure}
  \caption{Evaluation orders.}
  \Description{Evaluation orders.}
  \label{fig:evalstrategies}
\end{figure*}

First, let us specify precisely what we mean by ``call-by-value'' and ``call-by-name'' in this data-codata system.
An evaluation strategy is characterized by what kind of expressions we are allowed to substitute for a variable.
For example, call-by-value lambda calculus prohibits the substitution of non-canonical terms such as $1+1$ for $x$ in the redex $(\lambda x.t) (1+1)$.
We can directly translate this example to the symmetric calculus using the functions from \cref{fig:natNumbers:data}.
The term $1 + 1$ corresponds to $\muprdabs{k}{\mathbb{N}}{1 \gg \texttt{add}(1,k)}$.
We don't allow this term to be substituted for a producer variable in a reduction step if we follow a call-by-value reduction strategy.
Generalizing from this concrete example, call-by-value disallows the substitution of any $\mu$-abstractions for producer variables in a reduction step.

We have listed several possible evaluation strategies in \cref{fig:evalstrategies}.
An evaluation strategy is fully specified by the rules that govern the $\substitutable(e)$ predicate.
All evaluation strategies share the rules of \cref{fig:evalstrategies:common}.
These rules state that all function calls, xtors and local matches are substitutable.
Moreover, variables are also substitutable.
The reasoning behind the latter is as follows:
Since expressions are only substituted into other expressions when they appear at the top-level of a command that is being reduced, they cannot be variables, as the command would not be closed otherwise.
Thus, they must have been replaced with some other term in an earlier substitution step and this term must have been itself substitutable for this substitution to occur.

One way to specify an evaluating strategy is to prescribe a \emph{global} evaluation order.
For example, OCaml chose call-by-value, whereas Haskell chose call-by-name.\footnote{We ignore the difference between call-by-name and call-by-need in this article.}
In our formalism, this corresponds to the definitions given in \cref{fig:evalstrategies:cbv} and \cref{fig:evalstrategies:cbn}.
Specifying a global evaluation order works, but has the disadvantage already indicated in the introduction: the validity of $\eta$ is restricted to data types (under call-by-value) or codata types (under call-by-name).

A better alternative is to specify evaluation order per-type; that is, whether a $\mu$ abstraction is substitutable depends on the polarity of the type $T$.
This corresponds to the rules given in \cref{fig:evalstrategies:polar}, which we call the \emph{polar} evaluation order.
The ``polar'' evaluation strategy is the most natural, since it satisfies the most $\eta$-laws.

Ideally we would want to prescribe the polar evaluation order in our system, but this doesn't interact nicely with de- and refunctionalization, which we discuss in the next subsection.
We therefore need an additional evaluation strategy, the \emph{nominal} strategy given in \cref{fig:evalstrategies:nominal}.
In the nominal strategy every type declares whether to evaluate its redexes by-value or by-name.

\subsection{Extensionality}
\label{subsec:evaluationorder:extensionality}

We now come back to the problem of $\eta$ equalities mentioned in \cref{sec:mainideas}.
Assume that the variables bound in the  $\Gamma_i$ do not occur free in the expression $e$.
We define \etaeq\ as the congruence on terms generated by the following two equations.
\begin{align*}
  \match{\data}\, T\ \{ \overline{\X\Delta} \Rightarrow \overline{\X \idmorphism{\Gamma}} \gg e \} &\etaeq e \tag{$\eta_\data$}
\label{eq:eta-data}\\
  \match{\codata}\, T\ \{ \overline{\X\Delta} \Rightarrow e \gg \overline{\X \idmorphism{\Gamma}}  \} &\etaeq e \tag{$\eta_\codata$}
\label{eq:eta-codata}
\end{align*}

In \cref{subsec:mainideas:evaluationorder} we presented an example which showed that the validity of $\eta$ equality depends on the evaluation order of the language.
For the polar evaluation order as defined above, all $\eta$ equalities are valid.

\begin{lemma}
  Assuming the polar evaluation order from \cref{fig:evalstrategies:polar}, we have for any expression $e$ and $e'$:
  \begin{align*}
    (\substitutable(e) \wedge e \etaeq e') \Rightarrow \substitutable(e')
  \end{align*}
\end{lemma}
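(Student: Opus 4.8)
The plan is to show that, under the polar order, substitutability is an \emph{invariant} of the congruence $\etaeq$, i.e.\ that $e \etaeq e'$ implies $\substitutable(e) \Leftrightarrow \substitutable(e')$; the stated implication then follows at once. Since $\etaeq$ is a symmetric relation, the universally quantified implication in the lemma is in fact equivalent to this biconditional invariant, so nothing is lost by strengthening the goal in this way. The argument rests on two observations, one about the $\substitutable$ predicate and one about the two generating $\eta$-equations.

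First I would record that, under the polar order, $\substitutable(e)$ depends only on the \emph{outermost} syntactic form of $e$ together with the type annotation carried by a $\mu$/$\tilde\mu$ abstraction, and never on the proper subterms of $e$. Concretely, variables, xtors/function calls $\mathcal{X}\sigma$, and local (co)matches $\match{p}\,T\,\{\ldots\}$ are always substitutable by the common rules of \cref{fig:evalstrategies:common}; a producer abstraction $\muprdabs{x}{T}{c}$ is substitutable iff $T$ is codata, and a consumer abstraction $\muconabs{x}{T}{c}$ is substitutable iff $T$ is data, by the rules of \cref{fig:evalstrategies:polar}. Consequently, replacing a proper subterm of any expression by another expression leaves the head and the type annotation unchanged, and hence leaves its substitutability status unchanged.

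The second, crucial observation concerns the two equations when applied at the top level. The equation $\eta_{\data}$ relates a pattern match $\match{\data}\,T\,\{\ldots\}$, which types as a consumer of the data type $T$, with a term $e$ of the same type and orientation, i.e.\ \emph{also} a consumer of $T$. By inspecting \textsc{T-Var}, \textsc{T-Xtor}, \textsc{T-Fun}, \textsc{T-Match} and \textsc{T-}$\mu$, I would check that under the polar order \emph{every} consumer of a data type is substitutable: variables, function calls and local matches by the common rules, and the only possible $\mu$-form, $\muconabs{x}{T}{c}$, by the polar data rule (a destructor/xtor of a data type is a producer, so it cannot occur here). Dually, $\eta_{\codata}$ relates a copattern match with a producer of the codata type $T$, and every producer of a codata type is substitutable (variables, function calls, comatches, and the only possible $\mu$-form $\muprdabs{x}{T}{c}$ by the polar codata rule). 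Thus both generating equations relate two expressions that are simultaneously substitutable.

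Combining these, I would finish by induction on the derivation of $e \etaeq e'$ viewed as a congruence. The reflexivity, symmetry and transitivity steps are immediate for a biconditional invariant; a base step is discharged by the second observation above (both sides are substitutable); and a congruence step, which rewrites a subterm inside some constructor while leaving the head and any type annotation intact, is discharged by the first observation. The main point to get right — and the place where the polar order is genuinely needed — is the base step: it is exactly because the polar order makes \emph{all} consumers of data types and \emph{all} producers of codata types substitutable that the $\eta$-reduct of a substitutable term stays substitutable. Under a global evaluation order this invariant fails (for instance, under global \cbv\ a producer $\mu$-abstraction of a codata type is not substitutable, so $\eta_{\codata}$ would relate a substitutable comatch to a non-substitutable term), which is precisely why the lemma is stated only for the polar order.
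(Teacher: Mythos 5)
The paper states this lemma without any accompanying proof (it is also not among the statements verified in the Coq development, which covers only \cref{thm:preservation,thm:progress,theorem:progsinverse,theorem:typeabilitypreservation,theorem:semantic}), so there is no written argument to compare against; judged on its own merits, your proof is correct and is the natural argument the paper implicitly relies on. Your two observations are exactly the right decomposition: substitutability under the polar order is determined by the head constructor and the type annotation of a $\mu$-abstraction alone, and each generating equation relates a (co)pattern match, substitutable by \cref{fig:evalstrategies:common}, to a term of the same type and orientation --- a consumer of a data type for $\eta_\data$, a producer of a codata type for $\eta_\codata$ --- all of which are substitutable under \cref{fig:evalstrategies:polar}; the strengthening to a biconditional invariant is the right move to make the induction go through the symmetry and congruence cases. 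The one point worth flagging is that your base case depends on reading the $\eta$-equations as equations between \emph{well-typed} terms: on a purely syntactic reading of ``any expression,'' nothing would prevent instantiating $e$ in $\eta_\data$ with a producer abstraction $\muprdabs{x}{T}{c}$ over the data type $T$, which is not substitutable, and the claim would fail. That reading is clearly the paper's intent (the left-hand side of $\eta_\data$ only typechecks when $e \con T$, since the command $\X\,\idmorphism{\Gamma} \gg e$ must be well-typed), and you state the assumption explicitly, but since the lemma is quantified over ``any expression $e$ and $e'$'' it deserves to be called out as a hypothesis rather than treated as automatic.
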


Since $\etaeq$ is defined as a congruence generated by \cref{eq:eta-data} and \cref{eq:eta-codata}, the following theorem needs the reflexive-transitive closure $\reducesto^\ast$ of the reduction relation $\reducesto$ since multiple expansions might have occurred within one use of $\etaeq$.

\begin{theorem}
  Assuming the polar evaluation order from \cref{fig:evalstrategies:polar}, let $c_1, c_1', c_2$ be commands s.t. $c_1 \reducesto c_2$ and $c_1 \etaeq c_1'$.
  Then there exists a command $c_2'$, s.t. $c_2 \etaeq c_2'$ and $ c_1' \reducesto^{\ast} c_2'$.
\end{theorem}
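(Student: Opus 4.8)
The plan is to reduce the statement to a single-step commutation lemma and tile it along the zig-zag that witnesses $c_1 \etaeq c_1'$. Since $\etaeq$ is the congruence generated by $\eta_\data$ and $\eta_\codata$, I would first observe that $c_1 \etaeq c_1'$ is exactly a finite alternating sequence $c_1 = d_0 \leftrightarrow d_1 \leftrightarrow \cdots \leftrightarrow d_n = c_1'$, where each $\leftrightarrow$ applies one of the two equations, in either direction, at a single position. To make the induction go through I would prove the slightly stronger statement $R$ in which the hypothesis is $c_1 \reducesto^{\ast} c_2$ rather than a single step; the theorem is then the special case of one reduction step. I induct on $n$: peeling off $c_1 \leftrightarrow d_1$, a local lemma (call it Lemma M) converts $c_1 \reducesto^{\ast} c_2$ into $d_1 \reducesto^{\ast} e$ with $c_2 \etaeq e$, and the induction hypothesis applied to $d_1 \etaeq c_1'$ (length $n-1$) and $d_1 \reducesto^{\ast} e$ supplies $c_2'$ with $c_1' \reducesto^{\ast} c_2'$ and $e \etaeq c_2'$, whence $c_2 \etaeq c_2'$ by transitivity. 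Using $\reducesto^{\ast}$ in the hypothesis is essential, because Lemma M in general returns $e$ related to $c_2$ by several $\eta$-rewrites, and the reduction $d_1 \reducesto^{\ast} e$ is itself multi-step.

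The core is Lemma M: if $c_1 \reducesto^{\ast} c_2$ and $c_1 \leftrightarrow d$ is a single $\eta$-rewrite, then $d \reducesto^{\ast} e$ for some $e$ with $c_2 \etaeq e$. I would prove it by case analysis on the position of the rewrite relative to the top-level cut at which $c_1$ reduces, distinguishing the disjoint cases from the overlapping ones. If the rewrite lies strictly inside an argument substitution, inside a match branch, or inside the body of a $\mu$-abstraction, the same reduction rule still fires on $d$; here I must invoke the preceding lemma that $\etaeq$ preserves $\substitutable(\cdot)$, so that the side conditions of \textsc{R-}$\mu_1$ and \textsc{R-}$\mu_2$ remain satisfied after the rewrite. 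In these cases the rewrite is simply transported into the contractum — erased if it sat in a discarded branch, or duplicated if it sat in an argument that is substituted for a variable occurring several times — and $c_2 \etaeq e$ follows because $\etaeq$ is a congruence.

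The delicate subcases are those where the rewrite acts at the head of one of the two cut expressions and therefore interacts with the redex. Because $\eta_\data$ and $\eta_\codata$ relate a (co)match, which is a consumer of a data type respectively a producer of a codata type, to a plain expression of the same orientation, only those heads can be rewritten, which excludes rewriting a bare constructor or destructor and keeps the case list finite. In each remaining configuration an $\eta$-expansion inserts a (co)match that is immediately a \textsc{Match}/\textsc{Comatch} (or \textsc{ConCall}/\textsc{PrdCall}) redex; computing that reduction — using that $\X\idmorphism{\Gamma}$ instantiated by the incoming substitution $\sigma$ yields back $\X\sigma$ — reproduces $c_1$ exactly, so $d \reducesto c_1 \reducesto^{\ast} c_2$ and I may take $e = c_2$. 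Dually, contracting a head (co)match that is already in $\eta$-form collapses the matching step and again lands on $c_2$. I expect these reconvergence computations, together with the maintenance of the $\substitutable(\cdot)$ side conditions, to be the heart of the argument, and the reason the theorem holds only for an evaluation order validating the preceding lemma.

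The main obstacle I anticipate is the bookkeeping of duplication in Lemma M: a single $\eta$-rewrite can become many after a substituting reduction, so the naive single-step/single-step diagram is not a diamond. I would handle this by carrying out Lemma M with a parallel $\eta$-rewrite relation $\leftrightarrow_{\parallel}$ (finitely many disjoint rewrites in a fixed direction), proving that one reduction step commutes with one $\leftrightarrow_{\parallel}$-step up to $\reducesto^{\ast}$, and then passing to reflexive–transitive closures by a Hindley--Rosen style argument; taking the symmetric closure recovers $\etaeq$. Throughout I would assume the ambient program is well-formed and the commands closed, so that reduction is defined and the bound variables of the $\Gamma_i$ in the $\eta$-equations remain non-free, which is what lets each $\eta$-rewrite be transported across reduction without capture.
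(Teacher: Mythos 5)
The paper gives you nothing to compare against here: this theorem is stated without proof, and it is notably absent from the list of Coq-verified results (\cref{thm:preservation,thm:progress,theorem:progsinverse,theorem:typeabilitypreservation,theorem:semantic}). Judged on its own merits, your proposal is sound and is essentially the canonical commutation argument for such a statement. You correctly identify the three load-bearing points: (i) reduction in this calculus happens only at the root of a command, so a single $\eta$-rewrite either sits strictly inside one side of the cut (where it is transported, erased, or duplicated by the contraction) or at the head of one side; (ii) at the head, an expansion against an xtor creates an administrative \textsc{Match}/\textsc{Comatch} redex that reduces back to $c_1$ precisely because $(\X\,\idmorphism{\Delta} \gg e)\,\sigma = \X\sigma \gg e$ under the freshness condition, while a head contraction absorbs the reduction step; and (iii) the substitutability-preservation lemma is what keeps the \textsc{R-}$\mu$ side conditions intact after a head contraction, and this is exactly what fails under a global strategy --- which is why the theorem is restricted to polar order.

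Three refinements are needed to make the sketch watertight. First, your Lemma~M (one rewrite against $\reducesto^{\ast}$) cannot be proved by iterating the single-step diagram as you first state it: after one reduction step the $\eta$-side becomes a multi-rewrite relation, so the induction is circular. Your own final paragraph contains the fix --- prove the local diagram between one reduction step and one \emph{parallel} $\eta$-step and then tile (Hindley--Rosen) --- so the earlier formulation should simply be discarded in favour of it. Second, ``finitely many disjoint rewrites'' is not the right parallel relation: disjoint-position rewriting is not closed under the substitutions that reduction performs (a rewrite inside a substituted argument lands nested inside a rewrite in the body), so the local diagram would not close in one parallel step; you need a Tait--Martin-L\"of-style simultaneous relation, which is substitutive by construction. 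Third, your claim that a head expansion always yields an immediate \textsc{Match}/\textsc{Comatch} redex overlooks the configuration where the other side of the cut is a $\mu$-abstraction: there the command is an \textsc{R-}$\mu$ redex, it does \emph{not} reduce back to $c_1$, and the case must instead be closed as a duplication case (the inserted match is substituted for the bound variable, and $c_2$ is related to the result by parallel rewrites at each occurrence). None of these changes the architecture of your argument.
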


\section{Changing Evaluation Order}
\label{sec:xtorizationTwo}
In this section, we will define the missing two transformations which change the evaluation strategy of a type, without changing its polarity (data or codata).
That is, we show how to implement the vertical transformations of the diagram introduced in \cref{sec:mainideas}.
\begin{center}
  \begin{tikzcd}
    \cbv\ \data \ar[rr, shift left=0.5ex, gray, draw=gray, "\refunctionalize"] \ar[d, shift left=0.75ex, "\cbvtocbn"] & &
    \cbv\ \codata \ar[ll, shift left=0.5ex, gray, draw=gray, "\defunctionalize"] \ar[d, shift left=0.75ex, "\cbvtocbn"] \\
    \cbn\ \data \ar[rr, shift left=0.5ex, gray, draw=gray, "\refunctionalize"] \ar[u, shift left=0.75ex, "\cbntocbv"] & &
    \cbn\ \codata \ar[ll, shift left=0.5ex, gray, draw=gray, "\defunctionalize"] \ar[u, shift left=0.75ex, "\cbntocbv"]
  \end{tikzcd}
\end{center}
The full refunctionalization from call-by-value data to call-by-name codata is then just $\refunctionalize$, extended to account for $\mu$,
followed by $\cbvtocbn$ (or the other way around, which makes no difference), and similarly for defunctionalization.
The $\cbvtocbn$ and $\cbntocbv$ transformations use so-called \emph{shifts}, which we introduce in \cref{subsec:evaluationorder:shifts}, to embed cbn into cbv types or vice versa.
When we apply defunctionalization and then refunctionalization (or vice versa),
we use $\cbvtocbn$ after $\cbntocbv$ and thus introduce \emph{double-shift} artifacts
(thanks to $\defunctionalize$ and $\refunctionalize$ being mutually inverse,
apart from that aspect, we do get back to the original program).
However, we show how these artifacts can be removed so that the two transformations
are indeed mutually inverse.

\subsection{\Derefunc\ and evaluation order}
\label{subsec:evaluationorder:xfunc}

Extending the transformations $\xfunctionalize$ of \cref{sec:xtorization} to cover $\mu$ abstractions is straight-forward:
Since evaluation order is not changed in these transformations, we can just keep all $\mu$ abstractions within the program unchanged, i.e.\ the transformation on terms still remains the identity function.
Furthermore, the rest of the transformation remains a simple matrix transposition, as before.
However, as explained in \cref{subsec:mainideas:derefunceval}, the resulting transformation is unsatisfying on its own, since it does not allow the input and output to both have the transformed type in its polar, i.e. natural evaluation order.
Therefore, we introduce \emph{nominal} evaluation order and the $\cbvtocbn$ and $\cbntocbv$ transformations.

Specifically, there are two observations that we can make:
Firstly, this extended transformation only preserves typechecking if we keep the evaluation order of the type $T$ constant.
Otherwise we might violate the additional premise in the \textsc{Subst}$_2$ rule.
Secondly, even if the transformation were type-preserving, it would not be semantics-preserving, since we would now disambiguate the critical pair differently.
Thus, the central problem that we tackle in this section is:
\textbf{We cannot combine the simple \derefunc\ approach described above with polar evaluation order.}

Previously, user-defined types could be either data or codata types.
We now extend this scheme and parameterize user-defined types by their evaluation order,
written with prefix \cbv\ or \cbn\, e.g.
{\footnotesize
  \begin{align*}
    \cbv\ \data\ \mathbf{type}\ \N\ \{\ \texttt{Zero}()\ ;\ \texttt{Suc}(x \prd \N)\ \}.
  \end{align*}
}%
The evaluation order is now induced by this parameter of the type,
as formally defined by the two rules for the $\substitutable({-})$ judgement
for \emph{nominal} evaluation order (\cref{fig:evalstrategies:nominal}).

As sketched above, we can now factorize the transformations between a \cbv\ data type and a \cbn\ codata type into two steps,
the first of which changes the polarity of the type without changing its evaluation order
($\refunctionalize$ and $\defunctionalize$ presented in \cref{sec:xtorization}, trivially extended to $\mu$).
In order to define the second part of this algorithm, we have to introduce shift types, which we do in the next subsection.

\subsection{Shift Types}
\label{subsec:evaluationorder:shifts}

The shift types, briefly introduced in \cref{subsec:mainideas:summary}, are ordinary user-defined types in this system; the type system does not need to be extended for them, and the normal evaluation and typing rules presented above apply.
We allow ourselves to write the definitions of the shifts parameterized by a type, even though our formalism doesn't allow this.
This is not essential; we could alternatively add one shift type to the program for every type that we want to shift.
The definitions of the shift types are:
{\footnotesize
\begin{align*}
  &\cbv\ \data\ \mathbf{type}\ \cbvshift{T}\ \{\ \cbvterm{x \prd T}\ \} \\
  &\cbn\ \codata\ \mathbf{type}\ \cbnshift{T}\ \{\ \cbnterm{x \con T}\ \}
\end{align*}
}

As is apparent from their definition, these types do not change the logical meaning of the type they shift.
Their effect is restricted to the evaluation order of the program, in every other respect they behave like an identity wrapper.
This is why the following helper function $\S^o_s$ allows to embed any producer or consumer expression in the corresponding shifted type:
\begin{align*}
  \S^{\mathbf{prd}}_\cbv(e) &\coloneq \cbvterm{e} \\
  \S^{\mathbf{prd}}_\cbn(e) &\coloneq \mathbf{match}_\codata\,\cbnshift{T}\ \{ \cbnterm{x} \Rightarrow e \gg x\} \\
  \S^{\mathbf{con}}_\cbv(e) &\coloneq \mathbf{match}_\data\,\cbvshift{T}\ \{ \cbvterm{x} \Rightarrow x \gg e \} \\
  \S^{\mathbf{con}}_\cbn(e) &\coloneq \cbnterm{e}
\end{align*}
That is, the following rule is derivable for all $o, s, e$ and $T$:
\begin{prooftree}
  \AxiomC{$\Gamma \vdash e \overset{o}{:} T$}
  \UnaryInfC{$\Gamma \vdash \S^o_s(e) \overset{o}{:}\ \cbxshift{T}$}
\end{prooftree}

We will now give some examples for how shift types can be used.
The \cbv\ data type $\mathbb{N}$ of natural numbers can be wrapped as $\cbnshift{\mathbb{N}}$ to obtain a codata type of \emph{delayed} natural numbers.
A list of ordinary natural numbers cannot contain the unevaluated expression $\muprdabs{k}{\mathbb{N}}{(1 \gg \texttt{add}(1,k))}$, since this expression is not substitutable and can therefore (according to rule \textsc{T-Subst}$_2$) not occur in a substitution.
However, a list of delayed natural numbers can contain the equivalent expression $\mathbf{match}_\codata\,\cbnshift{T}\ \{ \cbnterm{k} \Rightarrow 1 \gg \texttt{add}(1,k) \}$.
This allows a very fine-grained control over evaluation order in the types.
In a program which contains \cbv\ data type definitions of natural numbers and lists, as well as a function type, three different kinds of functions which expect a list of natural numbers can be distinguished.
A function with argument of type $\texttt{List}\ \N$ evaluates its arguments to a list of fully evaluated natural numbers.
If the argument type is $\texttt{List}\ (\cbnshift \N)$, then the spine of the list has to be fully evaluated before it is substituted in the body of the function, but the elements of the list might still be unevaluated.
Thirdly, if an argument of type $\cbnshift (\texttt{List}\ \N)$ is expected, the argument is passed by-name to the body of the function.

\subsection{Changing the evaluation order}
\label{subsec:evaluationorder:changingstrategy}

\begin{figure*}[htbp]
  \begin{subfigure}[b]{\textwidth}
    \begin{minipage}{.55\linewidth}
      \begin{align*}
        \evaltrans{s'\ p\ &\mathbf{type}\ T'\ \{ \overline{\X\,\Delta} \}\ \mathbf{with}\ \overline{\Y \Pi := e}}
        \coloneq\\[-.3em]
        & s''\ p\ \mathbf{type}\ T'\ \{ \overline{\X}\ \evaltrans{\overline{\Delta}} \}\ \mathbf{with}\ \overline{\Y \evaltrans{\Pi} := \evaltrans{e}}
      \end{align*}
    \end{minipage}
    \begin{minipage}{.35\linewidth}
      \centering \hfill where $ s'' \coloneq \begin{cases} s' &\quad (T \neq T') \\ s\ &\quad (T = T') \end{cases} $
    \end{minipage}
    \caption{Type declarations.}
    \Description{Type declarations.}
    \label{fig:evalchange:type}
  \end{subfigure}

  \begin{subfigure}[b]{0.9\textwidth}
    \begin{align*}
      \evaltrans{\diamond} &\coloneq \diamond &
      \evaltrans{\Gamma, x \overset{o}{:} T'} &\coloneq
      \begin{cases}
          \evaltrans{\Gamma}, x \overset{o}{:} T' & (T \neq T')\\
          \evaltrans{\Gamma}, x \overset{o}{:}\  \cbxshiftinv{T} & (T = T')
      \end{cases} \\
      \evaltrans{()} &\coloneq () &
      \evaltrans{(\sigma, e} &\coloneq (\evaltrans{\sigma},\evaltrans{e})
    \end{align*}
    \caption{Contexts and Substitutions.}
    \label{fig:evalchange:context}
  \end{subfigure}

  \begin{subfigure}[b]{0.9\textwidth}
    \begin{equation*}
      \evaltrans{\Done} \coloneq \Done \qquad
      \evaltrans{e_1 \gg e_2} \coloneq  \evaltrans{e_1} \gg \evaltrans{e_2}
    \end{equation*}
    \caption{Commands.}
    \label{fig:evalchange:commands}
  \end{subfigure}

  \begin{subfigure}[b]{0.9\textwidth}
    \begin{align*}
      \evaltrans{x} & \coloneq x \\
      \evaltrans{\X \sigma} & \coloneq
      \begin{cases}
        \X \, \evaltrans{\sigma}                    & \X \notin \text{Funs}(T)\ \text{and}\ \X \notin \text{Xtors}(T)\\
        \S^{\valfun{p}}_{\flip{s}}(\X \  \evaltrans{\sigma})   & \X \in \text{Xtors}(T)\ \text{and Polarity}(T) = p\\
        \S^{\cntfun{p}}_{\flip{s}}(\X \  \evaltrans{\sigma})   & \X \in \text{Funs}(T)\ \text{and Polarity}(T) = p\\
      \end{cases} \\
      \evaltrans{\match{p}\, T'\,\{\overline{\X \Delta} \Rightarrow \overline{c}\}} & \coloneq
      \begin{cases}
        \match{p}\,T'\, \{\overline{\X \Delta} \Rightarrow \evaltrans{\overline{c}}\} & (T \neq T')\\
        \S^{\cntfun{p}}_{\flip{s}}(\match{p}\,T\, \{\overline{\X \Delta} \Rightarrow \evaltrans{\overline{c}}\}) & (T = T')
      \end{cases} \\
      \evaltrans{\muqabs{x}{T'}{c}} & \coloneq
      \begin{cases}
        \muqabs{x}{T'}{\evaltrans{c}} & (T \neq T') \\
        \muqabs{x}{\ \cbxshiftinv{T}}{\evaltrans{c}} & (T = T')
      \end{cases}
    \end{align*}
    \caption{Expressions.}
    \label{fig:evalchange:expressions}
  \end{subfigure}
  \caption{Changing evaluation order of a type.}
  \label{fig:evalchange}
\end{figure*}

\cref{fig:evalchange} defines the transformation \evaltransname\!, which changes the evaluation strategy specified for type $T$ to the evaluation strategy $s$, for the different syntactic entities.
For type declarations (\cref{fig:evalchange:type}), $\evaltransname$ of course changes the specified evaluation order of $T$ to $s$, and in all declarations applies $\evaltransname$ to the contexts in the signatures.
Note that we assume that the transformation is only applied to applied to types where it makes sense, i.e.\ $\evaltransnametemplate{T}{\cbv}$ is only applied if $T$ is a \cbn\ type and $\evaltransnametemplate{T}{\cbn}$ only if $T$ is a \cbv\ type. 
If used this way, note that in the case of $T = T'$, we also have $s = \flip{s'}$.

In contexts (\cref{fig:evalchange:context}), $\evaltransname$ replaces $T$ with the relevant shifted type to retain the original evaluation order $s$.

Accordingly, in the appropriate places in expressions (\cref{fig:evalchange:expressions}) the type of a $\mu$ has to be changed to the corresponding shifted type,
and a $\mathbf{match}$ or a call to \texttt{CBV} or \texttt{CBN} has to be inserted to wrap expressions of type $T$, resulting in the corresponding shifted type.
Thus, overall, the evaluation order specified for $T$ is changed while retaining the operational semantics of the program.%
\footnote{Up to reduction of administrative redexes caused by the wrapping.}

\subsection{Converting from call-by-value to call-by-name and back: removing double-shifts}

When using the transformation twice on a program $P$ w.r.t. a type T, the resulting program, i.e. $\doubletrans{P}$, contains double-shift artifacts, which can be removed in order to obtain the original program (with type $T$ replacing type $\cbxshiftinv{\cbxshift{T}}$).
Expressions of type $\cbvshift{\cbnshift{T}}$ which contain such artifacts are either
{\footnotesize
  \begin{align*}
    \cbvterm{\match{p} \cbnshift{T}\  \{ \cbnterm{k} \Rightarrow e \gg k \}}
  \end{align*}
}%
or
{\footnotesize
  \begin{align*}
    \muprdabs{x}{\cbvshift{\cbnshift{T}}}{c}.
  \end{align*}
}%
These can respectively be replaced by the result of recursively replacing subexpressions
in the expressions $e$ and $\muprdabs{x}{T}{c}$, obtaining expressions of type $T$.
For a type $\cbnshift{\cbvshift{T}}$, we proceed analogously.
By a simple structural induction, one can show that the just described transformation, which we call $\mathcal{A}^T$, indeed goes from $\doubletrans{P}$ to $P$.

Combining this result with \cref{theorem:progsinverse}, we have shown that walking from one corner of our diagram to another and back (and eliminating double-shifts) leads back to the original program.

\begin{theorem}[Mutual Inverse (overall)]
  \label{theorem:progsinverse-overall}
  For every well-formed program $P$ and data type $T$ in $P$:
  \begin{align*}
    \mathcal{A}^T(\evaltransinv{\xfunctionalizeinv^{T}(\evaltrans{\xfunctionalize^{T}(P)})}) = P
  \end{align*}
\end{theorem}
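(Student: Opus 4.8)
The plan is to reduce the five-fold composition to the two facts already at our disposal—that the core transformations are mutually inverse (\cref{theorem:progsinverse}) and that $\mathcal{A}^{T}$ undoes a double application of the evaluation-order transformation, i.e.\ $\mathcal{A}^{T}(\doubletrans{P}) = P$—by interposing a single \emph{commutation lemma} between core \derefunc\ and the evaluation-order transformation. Concretely, I would first isolate the heart of the argument as the claim that for every well-formed program $Q$ and type $T$,
\[
  \xfunctionalizeinv^{T}(\evaltrans{Q}) = \evaltransname(\xfunctionalizeinv^{T}(Q)).
\]
That is, changing the evaluation order of $T$ to $s$ and then flipping its polarity produces exactly the same program as flipping the polarity first and then changing the evaluation order.

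Granting this lemma, the theorem follows by a short equational chain. Applying the commutation lemma with $Q = \xfunctionalize^{T}(P)$ lets me push $\xfunctionalizeinv^{T}$ past $\evaltransname$, turning $\xfunctionalizeinv^{T}(\evaltrans{\xfunctionalize^{T}(P)})$ into $\evaltrans{\xfunctionalizeinv^{T}(\xfunctionalize^{T}(P))}$; by \cref{theorem:progsinverse} the inner $\xfunctionalizeinv^{T}(\xfunctionalize^{T}(P))$ collapses to $P$, leaving $\mathcal{A}^{T}(\evaltransinv{\evaltrans{P}}) = \mathcal{A}^{T}(\doubletrans{P})$, which is $P$ by the double-shift removal property established immediately above.

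I would prove the commutation lemma by structural induction over the syntactic entities of \cref{fig:formalization:terms}, following the clauses of \cref{fig:evalchange}. Two properties drive every case. First, core \derefunc\ is the identity on terms and commands, so $\xfunctionalizeinv^{T}$ never rewrites a body; it only transposes the matrix of the declaration of $T$ and reclassifies its xtors as funs and vice versa. Second, the only clauses of $\evaltransname$ that consult the polarity $p$ of $T$ are the xtor and fun cases, where the inserted shift carries orientation $\valfun{p}$ and $\cntfun{p}$ respectively. The crux is then that the polarity flip performed by $\xfunctionalizeinv^{T}$ turns an xtor of a polarity-$p$ type into a fun of a polarity-$\flip{p}$ type (and conversely), while the helper-function identities $\valfun{p} = \cntfun{\flip{p}}$ and $\cntfun{p} = \valfun{\flip{p}}$ from \cref{fig:formalization:terms} guarantee that the shift wrapper $\S^{o}_{\flip{s}}$ carries exactly the same orientation $o$ on both sides of the equation. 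All remaining clauses—contexts, $\mu$-abstractions, and commands—insert shifts that depend only on the target strategy $s$ through $\cbxshiftinv{T}$ and not on polarity, so they commute with $\xfunctionalizeinv^{T}$ trivially. Note also that $\xfunctionalize^{T}$ preserves evaluation order, so the precondition under which $\evaltransname$ may be applied is stable across the core transformation.

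The main obstacle I anticipate is bookkeeping rather than conceptual. I must verify that the transposition of the declaration of $T$ commutes, cell by cell, with the application of $\evaltransname$ both to the command bodies and to the argument contexts $\Delta$ and $\Pi$; in particular, $\evaltransname$ applied to a context is insensitive to whether that context belongs to an xtor or a fun, so that swapping the roles of rows and columns is harmless. I must also check that the freshly introduced shift type $\cbxshiftinv{T}$ is defined independently of the polarity of $T$ and is left untouched by $\xfunctionalizeinv^{T}$ (which only rewrites the declaration of $T$ itself), so that both sides of the commutation equation add the very same shift declaration. Once these routine checks are discharged, the equational chain above yields the claim.
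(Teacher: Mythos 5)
Your proposal is correct and follows essentially the same route as the paper: reduce the five-fold composition to \cref{theorem:progsinverse} together with the double-shift removal property $\mathcal{A}^T(\doubletrans{P}) = P$, using the fact that core \derefunc\ commutes with the evaluation-order change. The only difference is that you isolate and prove that commutation as an explicit lemma, whereas the paper merely asserts it in passing (``or the other way around, which makes no difference''), so your write-up is, if anything, more complete than the paper's own argument.
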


\section{Related Work}
\label{sec:relatedwork}

There are two separate strands of related work: the development of the theory of defunctionalization and refunctionalization on the one hand, and the development of symmetric calculi on the other hand.

\paragraph{Defunctionalization and refunctionalization}
\Derefunc\ of the function type have a long history, of which we only cite the seminal papers \citep{reynolds72definitional,danvy01defunctionalization,danvy09refunctionalization}.
The generalization of defunctionalization from functions to arbitrary codata types was described by \citet{uroboro2015} for a simply typed system without local lambda abstractions or local pattern matches.
That the defunctionalization of polymorphic functions requires GADTs was first observed by \citet{pottier06polymorphic};
that the generalization to data and codata types then also requires GAcoDTs has been described by \citet{ostermann2018dualizing}.
How to treat local pattern and copattern matches in such a way as to preserve invertibility of defunctionalization and refunctionalization has been described by \citet{Uroboro2019}.
The novel aspect of the present work is that it presents completely symmetric data and codata types, and that it also considers the interaction with evaluation order.

\paragraph{Symmetric calculi}
Our core calculus was strongly influenced by the ideas of \citeauthor{zeilberger2008unity}'s Calculus of Unity (\CU) \citep{zeilberger2008unity}.
In contrast to our system, \CU\ does not provide direct means for user-defined data types
(though a related system does have a similar mechanism \citep{zeilberger2008focusing}).
Furthermore, its treatment of pattern-matches on recursive types is not syntactical: a pattern match on $\mathbb{N}$ contains an infinite amount of cases, similar to the $\omega$-rule \cite{hilbert1931grundlegung} in formal systems of arithmetic.
By contrast, our system (which is hence easily implementable) only allows finite matches,
which also have to be shallow anyway, in order to facilitate de- and refunctionalization.

The other important heritage is the ongoing quest for proof-theoretically well-behaved term assignment systems for sequent calculus.
The \lambdamutildemu\ calculus of \citet{curien00duality} is such a system with a fixed set of types; its authors discuss the problems of confluence and evaluation order, but do not consider data and codata types in their general form.
The ``codata--data'' calculus (\DualCalculus) of \citet{downen2019compiling} is an extension of \lambdamutildemu\ with user-defined data and codata types, polymorphism and higher kinds.
They discuss extensively the relation between evaluation order and the polarity of types (data or codata), but do not consider algorithms for switching these properties.
Their data and codata types are symmetric, but they do not exploit this fact in their formalization; instead they present separate rules for data and codata types.

Another well-known symmetric calculus is the dual calculus of \citet{wadler03call}.
This calculus uses the same judgements as our system and \lambdamutildemu, and contains both $\mu$ and $\tilde\mu$ constructs.
The types of that system are of ambiguous polarity; for example, the product type is defined by a positive introduction rule (pairing) and negative elimination rules (first and second projection).
\citeauthor{wadler03call} also discusses De Morgan duality.
He defines a dualizing operation $-^\circ$ which behaves as an involution on both types and terms, and relates proofs of a type $T$ to refutations of $T^\circ$ (in our notation: $\Gamma \vdash t \prd T \Leftrightarrow \Gamma^\circ \vdash t^\circ \con T^\circ$).
It is possible to define the same operation in our system.
Taking polarity into account, this relates proofs (refutations) of a product data type $\otimes$, i.e.\ a data type with one ``pair'' constructor, to refutations (proofs) of sum codata type $\parr$, a codata type with one ``case'' destructor.
The proofs and refutations for the sum data type $\oplus$, a data type with two constructors ``inl'' and ``inr'', and products codata type $\&$, a codata type with two destructors ``outl'' and ``outr'', are similarly related.

\section{Conclusion}
\label{sec:conclusion}
We have presented a system with completely symmetric data and codata types.
We have also presented transformations which transform cbv data into cbn codata, and vice versa; this transformation has been factored into one transformation which changes the polarity and one transformation which changes the evaluation order.
This has revealed that evaluation order and polarity of a type, even though related, are best treated separately when considering the conversion between data and codata.
In particular, codata types have more to offer than just the representation of infinite data, they are an essential ingredient in getting the evaluation strategy of a language right.
We believe that future programming languages should support both data and codata types, and that their interaction with evaluation order should follow a principled design.
We think that this article has presented one such approach.

\bibliography{bib}

\appendix

\end{document}